\documentclass[11pt]{article}
\usepackage[T1]{fontenc}

\usepackage{times}
\usepackage{algorithm}
\usepackage{hyperref}
\usepackage{algorithmic}
\usepackage{amsmath} 
\usepackage{amsthm}
\usepackage{amssymb}
\usepackage{thmtools}
\usepackage[letterpaper, margin = 1in]{geometry}

\newtheorem{theorem}{Theorem}

\newtheorem{lemma}[theorem]{Lemma}

\newtheorem{observation}[theorem]{Observation}
\newtheorem{corollary}[theorem]{Corollary}
\usepackage{graphicx}
\usepackage{color}
\newcommand{\ignore}[1]{}
\usepackage{booktabs}
\usepackage{subcaption}
\usepackage{booktabs} 
\usepackage{array}   
\usepackage{url}
\usepackage{authblk}
\usepackage[symbol]{footmisc}

\begin{document}
\title{Effective Traveling for Metric Instances of the Traveling Thief Problem}

\author[1]{Jan Eube}
\author[2]{Kelin Luo}
\author[3]{Aneta Neumann}
\author[3]{Frank Neumann}
\author[1]{Heiko R{\"o}glin}

\affil[1]{University of Bonn, Germany}
\affil[2]{University at Buffalo, Buffalo, USA}
\affil[3]{ Adelaide University, Adelaide, Australia}

\maketitle

\begin{abstract}
\ignore{
The traveling thief problem (TTP) is a multi-component problem which combines the classical traveling salesperson problem (TSP) and knapsack problem (KP). It has gained significant interest in the evolutionary computation and heuristic search literature. While the packing part corresponding to the knapsack component can be solved by dynamic programming in pseudo-polynomial time, little is known for the optimization of the TSP tour for a given packing plan.
This paper provides new insights into optimizing the tour for a given fixed packing plan.
Our setup generalizes the tour optimization part for TTP and we study weighted versions of 
the TSP, where the travel cost increases with the weight of items collected along the journey. 

We consider different types of distance metrics and weight functions, and we provide (in)approximability results and exact algorithms depending on the specific metric and cost function under consideration. Specifically, we show that the problem can be solved via dynamic programming in \( O(n^2) \) time on a path metric for general cost functions.  
We also prove that the problem is NP-hard even on a star metric and present a constant-factor approximation algorithm, inspired by techniques from scheduling problems with general cost functions. Finally, we also develop an approximation algorithm for the problem under a general metric with a linear cost function. We complement our theoretical results with experimental investigations on well-known instances of the TTP
on a line, demonstrating the practical effectiveness of our methods.
}

The Traveling Thief Problem (TTP) is a multi-component optimization problem that captures the interplay between routing and packing decisions by combining the classical Traveling Salesperson Problem (TSP) and the Knapsack Problem (KP). The TTP has gained significant attention in the evolutionary computation literature and a wide range of approaches have been developed over the last 10 years. Judging the performance of these algorithms in particular in terms of how close the get to optimal solutions is a very challenging task as effective exact methods are not available due to the highly challenging traveling component.
In this paper, we study the tour-optimization component of TTP under a fixed packing plan. We formulate this task as a weighted variant of the TSP, where travel costs depend on the cumulative weight of collected items, and investigate how different distance metrics and cost functions affect computational complexity. We present an \(O(n^2)\)-time dynamic programming algorithm for the path metric with general cost functions, prove that the problem is NP-hard even on a star metric, and develop constant-factor approximation algorithms for star metrics. 
Finally, we also develop an approximation algorithm for the problem under a general metric with a linear cost function.

We complement our theoretical results with experimental evaluations on standard TTP instances adjusted to a path metric. Our experimental results demonstrate the practical effectiveness of our approaches by comparing it to solutions produced by popular iterative search algorithms. The results show that our methods are able to significantly improve the quality of solutions for some benchmark instances by optimizing the traveling part while pointing out the optimality of the travel component for other solutions obtained by iterative search methods.
\end{abstract}

\section{Introduction}
Many real-world optimization problems involve several NP-hard combinatorial optimization problems that need to be solved in combination to come up with high quality solutions for the complex problem at hand. Such problems being a combination of different hard combinatorial optimization problems are known as multi-component problems in the literature~\cite{DBLP:books/sp/19/BonyadiM0019}. Iterative search algorithms such as different variants of local search, evolutionary algorithms and ant colony optimization have been applied to the well-known traveling thief problem (TTP) which is a multi-component problem that combines the traveling salesperson problem (TSP) and the knapsack problem (KP). 

Understanding the interactions between these problems and how the combination of different subproblems impacts the complexity of solving them is a challenging task~\cite{DBLP:journals/ec/PrzybylekWM18} and has been mainly studied from an empirical perspective~\cite{DBLP:conf/gecco/PolyakovskiyMM17,DBLP:journals/swevo/HerringKY24}. With this paper, we contribute to this area of research from a theoretical perspective and investigate how to solve the class of weighted traveling salesman problems where travel costs increase monotonically with the weight of items collected while traveling. Our developed approaches allow to solve the traveling component of TTP in an optimal way in restricted settings. This allows to judge the quality of solutions obtained by popular iterative search algorithms for TTP and increase their theoretical undestanding in terms of performance gaps.

\subsection{Related work}
The traveling salesman problem (TSP) is one of the best studied combinatorial optimization problems. It has been investigated extensively both from a theoretical and practical perspective and a wide range of powerful TSP solvers are available.
In real world settings, routing problems such as the TSP are often combined with other problems such as packing. In this paper, we consider the class of TSP problems with such packing components influencing the travel time when moving from a node $i$ to node $j$ in the TSP. Our setting is motivated by recently investigated problems such as the traveling thief problem (TTP)~\cite{Bonyadi2013TTP} and the weighted TSP~\cite{DBLP:conf/gecco/BossekCK020} for which the travel cost when moving between cities increases monotonically with respect to the weight of items collected at previously visited cities.
The TTP combines the TSP with the knapsack problem, and a combination of a tour and a packing plan needs to be found that optimizes an objective function which accounts for the profit of collected items and non-linear travel costs that depend on the weight of the collected items while visiting the cities.
A wide range of approaches for the TTP and its variants have been proposed in the literature~\cite{DBLP:conf/gecco/PolyakovskiyB0MN14,DBLP:journals/ec/PrzybylekWM18,DBLP:journals/swevo/HerringKY24,DBLP:journals/telo/NikfarjamNN24}. 

The case of finding an optimal packing plan for a fixed tour is already NP-hard. However, mixed integer programming~\cite{DBLP:journals/eor/PolyakovskiyN17} and dynamic programming~\cite{DBLP:conf/algocloud/NeumannPSSW18} approaches are available and have been shown to solve the packing part when given a fixed tour exactly or approximately for up to $1000$ items. Finding an optimal tour for a given packing plan on the other hand seems to be much more difficult and no effective exact approaches or approximation algorithms are currently available if the weighting of the travel distances is non-linear. If the packing plan is fixed, then the problem turns into a weighted TSP problem. Linear-weighted metric TSP problems have been examined in \cite{DBLP:conf/gecco/BossekCK020} and different constant factor approximation algorithms have been introduced. In the case that each node contributes a weight of $1$, the relationship to the maximum latency problem~\cite{DBLP:conf/stoc/BlumCCPRS94} has been used to come up with constant-factor approximation algorithms and the results have also been generalized to polynomial weights.
Improved approximation results for the case where the TSP only includes distance of $1$ or $2$  have been obtained by exploring the connection to approximation algorithms for the classical $\{1,2\}$-TSP~\cite{DBLP:journals/mor/PapadimitriouY93}, which is already NP-hard as it is a generalization of the Hamiltonian cycle problem.

\subsection{Our contribution}
With this paper, we contribute to solving the tour optimization part of such weighted TSP instances and provide algorithms with approximation guarantees and polynomial time dynamic programming approaches for important weighted TSP formulations. 
We first prove that the Weighted Traveling Salesman Problem (W-TSP) remains NP-hard, establishing that the problem is computationally challenging even in highly structured settings. Despite this hardness, we design an $(8+\varepsilon)$-approximation algorithm running in polynomial time for this metric. This result demonstrates that strong approximation guarantees are achievable for nontrivial weighted variants of TSP and helps delineate the boundary between tractable and intractable cases. 
For W-TSP with a general cost function under a path metric, we develop polynomial-time dynamic programming algorithms for both fixed and arbitrary starting points, which form the core tour-optimization component of our experimental evaluation. By integrating this exact W-TSP solver into the TTP framework, we obtain substantially improved solution quality compared to existing approaches. Our experimental results show that the line-metric W-TSP algorithm consistently outperforms previously known heuristics on the considered TTP instances. To further enhance scalability, we introduce a grouping heuristic that accelerates the dynamic programming computation while preserving near-optimal solution quality, demonstrating the practical impact of our theoretical contributions and their effectiveness in advancing the state of the art for TTP. 

The paper is structured as follows. 
In Section~\ref{sec2}, we introduce the W-TSP which is the problems subject to our investigations and show in Section~\ref{sec3} that W-TSP is NP-hard even under restricted metric settings for the TSP part. Section~\ref{sec4} provides approximation and dynamic programming approaches for important settings. Our experimental investigations carried out in Section~\ref{sec5} show that the dynamic programming approach provides significant improvements when the underlying metric for the TSP is a path metric. Finally, we finish with some concluding remarks.
 
\section{Problem formulation} 
\label{sec2}
We introduce and study the \emph{Weighted Traveling Salesman Problem} (W-TSP), which generalizes the classical TSP, and also provides a permutation-based solution to the \emph{Traveling Thief Problem} (TTP) when the packing plan is fixed.  
The input is given as a TSP instance with cities $v_1, \ldots, v_n$ and distances $d(v_i, v_j)$ between them, and a set of $n$ items with weights located at the different cities. We denote by \( w_{v_i} \) the weight of the item located at node \( v_i \), and let \( f(w) \) denote the cost per unit distance of traversing while carrying total weight \( w \). To simplify writing we sometimes write $w_i$ instead of $w_{v_i}$.
The goal is to compute a TSP tour $\pi$ that minimize the overall cost. 
We denote by $\pi = (\pi_1, \ldots, \pi_n)$ the permutation of the cities which determines the order in which the cities are visited. This presents a solution to the TSP part of the TTP problem.  

In the classical TTP~\cite{Bonyadi2013TTP}, the function $f$ is a monotone function given by the inverse of the speed of the vehicle. The speed decreases linearly with respect to the weight of the items already collected. Furthermore, each tour starts at city \( v_1 \), and either no item is located at \( v_1 \), or the item located at \( v_1 \) is collected at the end of the tour. 

We are interested in general monotone functions $f$ that weight the distances which need to be traveled. 
Our goal is to find a permutation (also referred to as a tour) $\pi=(\pi_1, \pi_2, ..., \pi_n)$ which minimizes 
the travel cost given as 

$$
 T(\pi) =  f\left(\sum_{j=2}^n w_{\pi_{j}}\right) \cdot d(\pi_{n}, \pi_{1})   +  \sum_{i=1}^{n-1} f\left(\sum_{j=2}^i w_{\pi_{j}}\right) \cdot d(\pi_i, \pi_{i+1}).
$$

Related to this problem is the node weighted TSP problem introduced in \cite{DBLP:conf/gecco/BossekCK020}. Here, each node has a given weight and distances are weighted by the sum of the weights of the nodes that have already been visited except the first node. It is a special case of our formulation when the function of $f$ returns the sum of the weight of the nodes visited until city $\pi_j$. In contrast, our work considers a more general cost function \( f \), which does not necessarily satisfy the pure accumulated-weight property exploited in \cite{DBLP:conf/gecco/BossekCK020}. As a result, several algorithmic techniques and constant-factor approximation strategies developed for the node weighted TSP do not directly extend to our setting, and our approximation results apply to a strictly broader class of weighted TSP objectives.
 
 \section{NP-Hardness}  
 \label{sec3}
Note that when \( f(w) = 1 \), the W-TSP reduces to the classical Traveling Salesman Problem (TSP), which is known to be APX-hard. 
This section establishes that the W-TSP problem remains NP-hard even under restricted metric settings. Specifically, we show that W-TSP is NP-hard when the underlying distance metric is a special star metric.

We consider a special star metric, where a star represents a tree of height 1 with a root denoted as $v_0$ and leaves as $\{v_1, ..., v_n\}$. We define the metric on these nodes as follows: for all $i \in [n]$, the distance between $v_0$ and $v_i$ is denoted as $d(v_0,v_i)$, and for all $i, j \in [n]$ where $v_i\neq v_j$, the distance between $v_i$ and $v_j$ is defined as $d(v_i,v_j) = d(v_0, v_i) + d(v_0, v_j)$. 

 \begin{theorem} Even when the metric is a star metric, the W-TSP problem  is NP hard.\end{theorem}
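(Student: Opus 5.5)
We reduce from a scheduling problem with a general cost function, guided by the following structure. On a star metric the cost of a tour depends only on the order in which the leaves are visited. Going from leaf $v_i$ to leaf $v_j$ costs $d(v_0,v_i)+d(v_0,v_j)$ times the current rate.

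Fix the start at the root $v_0$, with weight $w_{v_0}=0$. The tour visits the leaves in an order $\sigma$. After visiting the $k$-th leaf the carried weight is $W_k=\sum_{\ell\le k} w_{\sigma_\ell}$. The cost then becomes
\[
T(\sigma)=f(0)\,d_{\sigma_1}+\sum_{k=1}^{n-1} f(W_k)\bigl(d_{\sigma_k}+d_{\sigma_{k+1}}\bigr)+f(W_n)\,d_{\sigma_n},
\]
where $d_i=d(v_0,v_i)$. Each leaf $i$ thus pays $d_i$ times the rate just before its visit plus the rate just after it. This is a single-machine scheduling objective in which job $i$ has processing time $w_i$ and weight $d_i$. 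The cost of each job is a function of its start and completion times, roughly $d_i\,(f(S_i)+f(C_i))$, with $S_i$ the start and $C_i$ the completion time of job $i$.

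The plan is to reduce from a known strongly or weakly NP-hard problem: minimizing $\sum_j w_j g(C_j)$ on one machine for a suitable nonlinear $g$. A natural source is $1||\sum w_j C_j^2$, or more generally $1||\sum w_jg(C_j)$, which is NP-hard for certain monotone $g$. An easier route is a direct reduction from Partition with $f$ a step function. Given integers $a_1,\dots,a_n$ summing to $2B$, set $w_i=d_i=a_i$. Let $f(w)=1$ for $w\le B$ and $f(w)=M$ for $w>B$, with $M$ large.

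We then compute $T(\sigma)$. Leaves visited entirely at or below weight $B$ contribute $2d_i$, and those entirely above contribute $2Md_i$. The unique leaf crossing from $\le B$ to $>B$ contributes $(1+M)d_i$. Ignoring the crossing leaf, the cost is minimized by maximizing the total $d$-mass of the prefix that keeps $W_k\le B$. Since $d_i=w_i$, this mass is at most $B$. The crossing leaf could spoil exactness, and we expect it to be the main obstacle. We handle it by making the visit order collect a leaf exactly reaching $B$ when a partition exists. In that case no leaf crosses strictly: the leaf reaching $B$ pays $d_i(1+1)$ if $f(B)=1$, and the remaining leaves pay $2M$.

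The cost is therefore $2B+2MB$ if and only if a subset sums to exactly $B$. Otherwise the prefix mass is at most $B-1$ and the crossing leaf pays $(1+M)d_i$, which forces the cost up by at least about $M-1$. We verify this gap carefully by comparing the two cases with integer slack. Choosing $M=n\cdot 2B$ makes the threshold decision "cost $\le 2B(1+M)$" equivalent to Partition.

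If a monotone continuous $f$ is preferred, we replace the step with a steep linear ramp on $[B,B+1]$ and repeat the argument. Because Partition is only weakly NP-hard, this yields weak NP-hardness, which suffices for the theorem.
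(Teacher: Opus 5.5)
\paragraph{The step that fails.} Your per-leaf accounting $d_{\sigma_k}\bigl(f(W_{k-1})+f(W_k)\bigr)$ is correct. The claim that a perfect prefix means ``no leaf crosses strictly'' is false. If $W_k=B$ exactly, the next leaf $c=\sigma_{k+1}$ is reached along an edge traversed at load $B$. So its inbound half is charged at $f(B)=1$, and it pays $(1+M)d_c$, not $2Md_c$. A crossing leaf therefore always exists. In your instance a root-started tour costs
\[
T(\sigma)=4MB-(M-1)\,(2p+a_c),
\]
where $p\le B$ is the last prefix load not exceeding $B$ and $c$ is the next leaf. Minimizing $T$ means maximizing $2p+a_c$. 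A deficit in $p$ can be offset by a large $a_c$, so your threshold does not separate yes- from no-instances. Concretely, $\{1,3,3,3\}$ with $B=5$ has no partition. Yet the order $1,3,3,3$ costs $1+4+6+6M+3M=11+9M\le 10+10M=2B(1+M)$ for every $M\ge 1$, including your choice $M=2nB=40$. Yes-instances do not cost $2B+2MB$ either; they cost $2B+2MB-(M-1)a_c$.

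\paragraph{A second gap: the start city.} W-TSP optimizes over all permutations, and the start city's weight is collected only at the end. With $w_{v_0}=0$, nothing forces the tour to start at the root. If it starts at a leaf $s$, that leaf pays $d_s\bigl(f(0)+f(2B-a_s)\bigr)$ and its weight is never carried, so it behaves like a second discounted crossing leaf. For example, $\{1,1,6\}$ has no partition. Starting at the leaf of weight $6$ keeps the load at most $2\le B$ throughout, so the cost is $16\le 2B(1+M)$.

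\paragraph{How the paper avoids both problems.} The paper adds a designated crossing leaf $v_{n+1}$ with $d=w=s_{\max}$. It places the threshold at half the \emph{original} total: $f=0$ up to $\lambda/2$ and $f=1$ above. Every root-started tour then costs $2(\lambda+s_{\max}-X)-d_h$. Since $X\le\lambda/2$ and $d_h\le s_{\max}$, this is at most $\lambda+s_{\max}$ exactly when $X=\lambda/2$ and $d_h=s_{\max}$, which yields a partition. The paper also makes the root heavy ($w_0=\lambda+s_{\max}+1$, with $f=\infty$ above $\lambda+s_{\max}$), so every finite-cost tour must start at the root. Your reduction needs both devices, or equivalents, to go through.
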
    
 
\begin{proof}
We reduce the partition problem to the W-TSP with a star metric. For this, consider an instance
of the partition problem, i.e., given is a set $S$ of positive integers $\{s_1, s_2, ..., s_n\}$, and let $\lambda= \sum_{i\in [n]} s_i$,  
  The question is if $S$ can be partitioned into two subsets $S_1$ and $S_2$ such that the sum of the numbers in $S_1$ equals the sum of the numbers in $S_2$.    This problem is NP-hard~\cite{DBLP:books/fm/GareyJ79}.

For the given  instance of the partition problem, we create an instance of the W-TSP  with a star metric as follows. The instance is defined on the star metric $(V, E)$ with a root $v_0$ and leaves  $ \{v_1, ..., v_n, v_{n+1}\}$, $ d(v_0, v_i)= s_i  $ for all $i\in [n]$ and $d(v_0, v_{n+1})=s_{\max} := \max_{i\in [n]} s_i$.  Each node $v_i$ has  weight $w_i=s_i$ for all $i\in [n]$,  node $v_{n+1}$ has weight $w_{n+1}=s_{\max}$ and  node $v_0$ has weight $w_0=\lambda + s_{\max} + 1$. Finally, we specify the  cost function $f(w)$ as a simple piecewise function as follows:
  \begin{equation}
f(w)=
    \begin{cases}
        0 & \text{if } w \le \lambda/2\\
        1 & \text{if } \lambda/2 < w \le   \lambda+ s_{\max} \\
        \infty & \text{if } w >  \lambda+ s_{\max} 
    \end{cases}
\end{equation}  
Then the problem is to find a minimum W-TSP tour $\pi=(\pi_1,\ldots,\pi_{n+2})$  of $V$  with minimum cost, i.e., \[d(\pi_{n+2}, \pi_1) \cdot f\Big(\sum_{j=2}^{n+2} w_{\pi_j}\Big)  + \sum_{i=1}^{n+1} \Big(d(\pi_i, \pi_{i+1}) \cdot f\Big(\sum_{j=2}^i w_{\pi_j}\Big)\Big).\]   

We show that there is a valid partition $S_1, S_2$ iff the W-TSP instance has a solution of cost at most $ s_{\max}+ \lambda  $.

Let $S_1, S_2$ be a valid partition such that the sums of the elements in $S_1$ and $S_2$, respectively, are both $\lambda/2$. Then the following tour provides a solution for the W-TSP with travel cost $ s_{\max}+ \lambda$: 
starting from node $v_0$ (and pick up items in node $v_0$ at the end), first visit  all nodes corresponding to the integer set $S_1$, which costs $0$ as $f(w) =0$ for $w\le \lambda/2$. Then visit $v_{n+1}$, which increases the collected weight to $\lambda/2+s_{\max}$. Then visit all nodes corresponding to the set $S_2$ and return to $v_0$ afterwards. The total length of the subtour starting at $v_{n+1}$ and visiting all nodes corresponding to $S_2$ and returning to $v_0$ is $s_{\max}+2\sum_{s_i\in S_2}s_i=s_{\max}+\lambda$. Since $f(w)=1$ for $\lambda/2 < w \le   \lambda+ s_{\max}$, the total contribution of this subtour to the costs is $s_{\max}+\lambda$.
Finally, pick up node \( v_0 \), which incurs no additional cost.

For the other direction, assume we have a tour $\pi$ which provides a solution with travel cost no more than $ s_{\max}+ \lambda $   
for the W-TSP. 
Observe that the root \( v_0 \) must be the start node of the tour and is also the last node whose weight is collected. Otherwise, the tour would require returning to the start node after visiting \( v_0 \), during which the accumulated weight would include the additional weight \( w_0 = \lambda + s_{\max} + 1 \), resulting in \( f(w) = \infty \) and hence an infinite travel cost.
Therefore, without loss of generality, we assume that the tour starts at \( v_0 \). 
For the remaining nodes, without loss of generality, suppose node $v_h$ with $h\in [n+1]$ is the latest visited node in $\pi$ such that the total weight of all previous visited nodes is no more than $  \lambda/2$, and the total weight of the previous visited nodes is $X$. Notice that $X\le \lambda/2$ and $X+w_h> \lambda/2$. There are two cases: (1) $w_h=s_{\max}$. 
Then the total cost of the tour \( \pi \) is \( s_{\max} + 2(\lambda - X) \), which is at most \( s_{\max} + \lambda \) if and only if the partition instance admits a valid solution. Otherwise, \( X < \lambda / 2 \), and thus \( \lambda - X > \lambda / 2 \) implies \( s_{\max}+ 2(\lambda - X) > s_{\max} + \lambda\). 
(2) $w_h \neq s_{\max}$. Then the total cost of tour $\pi$ is $ s_h+ 2(\lambda +s_{\max} - X - s_h)= \lambda +s_{\max} + ( \lambda -2 X) + (s_{\max}-s_h)  > \lambda +s_{\max}$ since  $( \lambda -2 X)\ge 0$ and $s_{\max}-s_h> 0$. 
Thus, the constructed W-TSP instance admits a solution with total cost at most \( s_{\max} + \lambda \) if and only if the original partition instance has a valid partition. 
This establishes the correctness of the reduction and completes the NP-hardness proof.
\end{proof}

\section{Algorithms}   
\label{sec4}
In this section, we present three algorithms for constructing a visiting tour (also referred to as a permutation of all cities), each tailored to a different setting: two designed for special metrics—the star metric and the path metric—and one for a setting with a specialized linear cost function.

\subsection{Star metric: A constant approximation algorithm}
 First, we simplify the problem by restricting attention to tours that start and end at the center node, with the center node’s weight collected at the end. This restriction is without loss of generality. Indeed, consider any feasible tour \( \pi \) of the original problem that starts at a leaf node \( v \) and ends at \( v \), collecting the weight of \( v \) at the end. We construct a corresponding tour \( \pi' \) that starts and ends at the center node by prepending the edge \( (v,\text{center}) \) to the beginning of \( \pi \) and appending the edge \( (\text{center},v) \) to its end.

The additional cost incurred by this transformation is a fixed quantity determined solely by the distance between \( v \) and the center and the total accumulated weight at the end of the tour; in particular, it is independent of the order in which the remaining nodes are visited. Conversely, any tour starting and ending at the center can be transformed into a tour starting and ending at an arbitrary leaf node \( v \) by removing these two fixed segments, adjusting the total cost by the same constant. 
Therefore, optimizing over tours that start and end at the center is equivalent to optimizing over tours with arbitrary starting nodes, up to an additive constant. As a result, we may, without loss of generality, focus on the simplified setting in which the tour starts and ends at the center node. In this setting, visiting any other node consists of a trip from the center to that node followed by a return trip back to the center with the accumulated weight.

\begin{observation}
Any optimal tour will have the same total distance, equal to twice the sum of all edge lengths in the star graph;
Any tour will start with same the same weight $0$ and have the same total weight at the end, equal to the sum of the weights of all nodes. 
\end{observation}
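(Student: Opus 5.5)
Under the normalization just introduced, every tour starts at the center $v_0$, visits each leaf exactly once, returns to $v_0$, and collects the weight of $v_0$ last. I would prove both parts of the observation by writing an arbitrary such tour as a sequence of round trips. Fix any ordering $\pi$ of the leaves. In a star metric, the leg between consecutive leaves $u$ and $v$ has length $d(u,v)=d(v_0,u)+d(v_0,v)$. So the tour can be rewritten as: go from the center to the first leaf, then for each consecutive pair of leaves return to the center and go out again, and finally return from the last leaf to the center. I would make this identification formally by splitting each leg $d(\pi_i,\pi_{i+1})$ into its two center-incident parts.

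\textbf{Total distance.} After the splitting, each leaf $v_j$ contributes exactly one outbound segment of length $d(v_0,v_j)$, namely the one arriving at it, and exactly one inbound segment of the same length, namely the one leaving it. Hence the total traveled distance is $2\sum_{j} d(v_0,v_j)$, which is twice the sum of the edge lengths of the star. This holds for every tour, not only optimal ones, so in particular any optimal tour has this total distance. The point that needs checking is that the first and last legs, which touch $v_0$ directly, fit this accounting without double counting. They do: the first leg is the outbound segment of the first leaf, and the last leg is the inbound segment of the last leaf.

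\textbf{Weights.} The carried weight before leaving the start is the empty sum in $T(\pi)$, i.e.\ $0$. The weight carried on the final leg is the sum over all visited nodes except the starting one. Since the center's weight is collected at the end by convention, the final accumulated weight is the total weight of all nodes, and it does not depend on the ordering.

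I expect no real obstacle here. The only care needed is the bookkeeping of the splitting, together with the observation that when a segment is split, both halves are traveled with the same carried weight. That second fact is not needed for the observation itself, but it will matter later, when the cost is rewritten per leaf as $d(v_0,v_j)\bigl(f(W_{\text{before }j})+f(W_{\text{after }j})\bigr)$.
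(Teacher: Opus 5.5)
Your proposal is correct and matches the paper's reasoning. The paper gives no separate proof; it relies on the remark that, once tours are normalized to start and end at the center, each leaf visit is a center--leaf--center round trip, and your splitting via $d(u,v)=d(v_0,u)+d(v_0,v)$ makes exactly this precise, while the weight claims follow, as you say, from the empty sum on the first leg and the convention that the center's weight is collected last. Your remark that the distance identity holds for every tour, not only optimal ones, is also right and slightly sharpens the paper's wording.
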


Given a set of leaf nodes \( J \), let \( D(J) \) denote the total distance required to traverse all nodes in \( J \), starting and ending at the center. That is,
\[
D(J) = \sum_{v \in J} (d_{\rightarrow v} + d_{\leftarrow v}),
\]
where \( d_{\rightarrow v} \) is the distance from the center to node \( v \), and \( d_{\leftarrow v} \) is the distance from node \( v \) back to the center.
 
Our algorithm follows the same structure as a scheduling algorithm developed by Epstein et al.\ which aims at obtaining a sequence of jobs that is robust against unforeseen changes in the speed of a given machine \cite{epstein2010universal}. We end up with the following algorithm:
 
\begin{enumerate}
\item Scale the distances so that the minimum distance \( d_{v} \geq 1 \) for every \( j \in J \).
    \item For each \( i \in \{1, \ldots, \lceil  \log D(J) \rceil\} \), find a subset \( J^*_i \) of leaf nodes such that the total distance \( D(J^*_i) \leq 2^i \) and the total weight \( w(J^*_i) \) is maximized.\\
    \emph{Note:} \( J^*_{\lceil  \log D(J) \rceil} = J \).
    
    \item Construct a tour \( \pi \) as follows:
    \begin{itemize}
        \item Initialize an empty tour.
        \item For \( i = \lceil  \log D(J) \rceil\) down to 1:
        \begin{itemize}
            \item Append the nodes in \( J^*_i \setminus \bigcup_{k=0}^{i-1} J^*_k \) in any order to the end of the tour.
        \end{itemize}
    \end{itemize}
\end{enumerate}

 For a tour \( \pi \), we decompose the tour into \emph{round-trip segments}, where each segment consists of traveling from the center node to a leaf node and then returning from that leaf node back to the center. During each such round trip, the vehicle carries an accumulated weight corresponding to the total weight of all nodes collected before completing the return to the center.

For any weight threshold \( w \), let \( D^{\pi}(w) \) denote the total length of all round-trip segments in tour \( \pi \) whose return trips are completed while carrying an accumulated weight of at least \( w \). Equivalently, \( D^{\pi}(w) \) is the sum of the lengths of all center-to-leaf-to-center segments for which the accumulated weight on the return leg is at least \( w \). Figure~\ref{fig:startalg} illustrates this definition.

\begin{figure}[t]
\centering
\includegraphics[width=7.5cm]{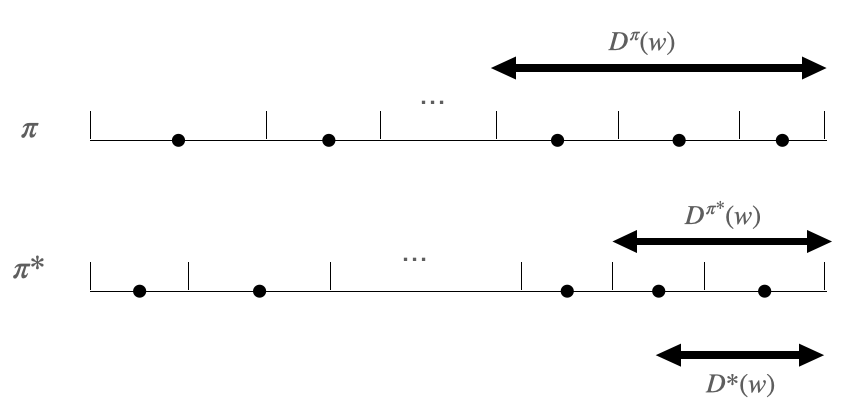}
\caption{Round-trip segment length with bounded weight.}
    \label{fig:startalg}
\end{figure}

Similar to the scheduling problem, we can prove that: 
\begin{lemma}Consider an optimal tour \( \pi^* \) and a given tour $\pi$.    
If $D^{\pi}(w) \leq c \cdot D^{\pi^*}(w)$ holds for any $w$, 
then the total cost of tour \( \pi \) is at most \( 2 \cdot c \) times the total cost of the optimal tour.
\end{lemma}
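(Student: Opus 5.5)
The plan is to sandwich the true cost of any tour between one and two times a surrogate cost that depends only on the function $w\mapsto D^{\pi}(w)$. Then I compare the surrogates of $\pi$ and $\pi^*$ threshold by threshold using the hypothesis, via a layer-cake (Abel summation) identity.

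\textbf{Step 1 (surrogate cost).} In the center-rooted setting fixed above, the tour $\pi$ is a sequence of round-trip segments, one per leaf $v$, of length $2d_v$. Let $W^{\pi}_{<}(v)$ and $W^{\pi}_{\ge}(v)=W^{\pi}_{<}(v)+w_v$ be the accumulated weights on the outbound and return legs. The segment costs $d_v f(W^{\pi}_{<}(v)) + d_v f(W^{\pi}_{\ge}(v))$. Define
\[
S(\pi)=\sum_{v} 2d_v\, f\bigl(W^{\pi}_{\ge}(v)\bigr).
\]
Since $f$ is monotone and $W^{\pi}_{<}(v)\le W^{\pi}_{\ge}(v)$, each segment costs at most $2d_v f(W^{\pi}_{\ge}(v))$. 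Since $f\ge 0$, it costs at least its return leg $d_v f(W^{\pi}_{\ge}(v))$. Hence
\[
\tfrac12 S(\pi)\le T(\pi)\le S(\pi),
\]
and in particular $T(\pi)\le S(\pi)$ and $S(\pi^*)\le 2T(\pi^*)$.

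\textbf{Step 2 (layer-cake identity).} Let $0=u_0<u_1<\dots<u_m$ be all distinct values of accumulated weight occurring on return legs in either $\pi$ or $\pi^*$. This is a finite set, so no measure-theoretic issues arise. For any tour $\sigma\in\{\pi,\pi^*\}$, write $f(W^{\sigma}_{\ge}(v))=f(u_0)+\sum_{k\ge1,\,u_k\le W^{\sigma}_{\ge}(v)}\bigl(f(u_k)-f(u_{k-1})\bigr)$ and exchange the order of summation. Since $D^{\sigma}(u_k)$ is exactly the total length of segments with $W^{\sigma}_{\ge}(v)\ge u_k$, this gives
\[
S(\sigma)=f(u_0)\,D^{\sigma}(u_0)+\sum_{k=1}^m \bigl(f(u_k)-f(u_{k-1})\bigr)\,D^{\sigma}(u_k).
\]
Monotonicity of $f$ makes every coefficient $f(u_0)$ and $f(u_k)-f(u_{k-1})$ nonnegative.

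\textbf{Step 3 (combine).} Applying the hypothesis $D^{\pi}(u_k)\le c\,D^{\pi^*}(u_k)$ termwise, with all coefficients nonnegative, yields $S(\pi)\le c\,S(\pi^*)$. Therefore
\[
T(\pi)\le S(\pi)\le c\,S(\pi^*)\le 2c\,T(\pi^*).
\]

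\textbf{Main obstacle.} The only real care is in Step 2: the identity needs $f(0)\ge 0$ and monotonicity, which hold for the nonnegative monotone cost functions considered here. One must also use the same breakpoints for both tours so that the termwise comparison is legitimate; taking the union of the weight values achieved by $\pi$ and $\pi^*$ handles this. The hypothesis is needed only at these finitely many thresholds. At $w=0$ it is automatic, since both sides equal $D(J)$ by the Observation, and it forces $c\ge 1$.
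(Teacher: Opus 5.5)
Your proof is correct. It rests on the same two ingredients as the paper's argument. The first is a layer-cake (Abel summation) decomposition of the cost over weight thresholds. The second is a factor $2$ lost because the outbound leg of a round trip is travelled at lower weight than its return. Where you differ is in where the factor $2$ enters.

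The paper loses it threshold by threshold. It introduces $D^*(w)$, the length that $\pi^*$ actually travels while carrying weight at least $w$. It notes that $D^*(w)=D^{\pi^*}(w)-d_v\ge\frac12 D^{\pi^*}(w)$, since only the outbound leg of the first round trip reaching weight $w$ is missing. From this it concludes $D^{\pi}(w)\le 2c\,D^*(w)$, and then cites the integration step of Epstein et al.\ rather than carrying it out.

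You lose the factor once, globally, via $\frac12 S\le T\le S$. After that you only compare the surrogates $S(\pi)$ and $S(\pi^*)$, whose layer-cake expansions involve exactly the quantities $D^{\sigma}(u_k)$ that appear in the hypothesis.

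Your route is self-contained and purely finite. You never need the exact profile $D^*(w)$ of $\pi^*$ or any Stieltjes integral against $f$, and the step the paper delegates to the scheduling literature is written out in full. It also gives the marginally stronger intermediate bound $T(\pi)\le c\,S(\pi^*)$. The paper's per-threshold formulation, in turn, matches the hypothesis of the Epstein et al.\ analysis verbatim, so that analysis can be reused without modification.
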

\begin{proof}
Let \( J' \subset J \) be a subset of leaf nodes visited in the first part of tour \( \pi \), and let \( w \) denote the total weight of \( J' \) together with the next leaf node visited in \( \pi \).  
Define \( D^*(w) \) to be the minimum total length traversed by the optimal tour \( \pi^* \) with accumulated total weight of at least \( w \). 
Let \( v \) be the first leaf node in \( \pi^* \) such that the accumulated weight reaches or exceeds \( w \). By definition, we have:
\[
D^*(w) = D^{\pi^*}(w) - d_v \geq \frac{1}{2} D^{\pi^*}(w),
\] 
since \( d_v \leq D^{\pi^*}(w) - D^*(w) \) and the minimum distance to a node is at least half the total length. 

Given that \( D^{\pi}(w) \leq c \cdot D^{\pi^*}(w) \), it follows that:
\[
D^{\pi}(w) \leq 2c \cdot D^*(w).
\]

By continuing with the the same steps as in the analysis by Epstein et al.~\cite{epstein2010universal} one then obtains that the total cost of \( \pi \) is at most \( 2c \) times that of \( \pi^* \). 
\end{proof}

\begin{lemma} 
\label{lem:fourtimesdistance}
If the metric is a star metric, our algorithm produces a tour $\pi$ such that 
\[D^\pi(w) < 4 D^{\pi^*}(w)\] for all $w \geq 0$.
\end{lemma}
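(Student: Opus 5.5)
We follow the Epstein et al.\ analysis. Fix a threshold $w \ge 0$ and let $D$ denote the total round-trip length, which by the Observation is the same for every tour. The idea is to compare the \emph{light prefix} of each tour, meaning the round trips completed while carrying accumulated weight less than $w$. By the definition of $D^\pi(w)$, this prefix has length $D - D^{\pi}(w)$. So it suffices to show that the light prefix of $\pi$ is not much shorter than that of $\pi^*$, in a way that converts into the factor $4$.

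Let $L^*$ be the set of leaves whose round trips in $\pi^*$ are completed with accumulated weight below $w$. Then $D(L^*) = D - D^{\pi^*}(w)$ and $w(L^*) < w$. Take the smallest index $i$ with $D(J \setminus L^*) \le 2^i$; we use the complement because the algorithm fills the tour from the back. Here is the key reinterpretation of step 2. The tour $\pi$ is built in the order $J^*_{\lceil\log D(J)\rceil}, \dots$, and the leaves in $\bigcup_{k \le i} J^*_k$ come last. Choosing $J^*_i$ as a maximum-weight set of length at most $2^i$ is therefore equivalent to choosing a \emph{minimum-weight} set $J\setminus J^*_i$ among those of length at least $D-2^i$.

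First we fix the orientation. In $\pi^*$, the set $L^*$ is a light initial part of small weight. In $\pi$, the algorithm places heavy, short-length sets at the end, so that early portions are light and long. Since $J \setminus L^*$ has length at most $2^i$, it is a feasible candidate for $J^*_i$. Hence $w(J^*_i) \ge w(J\setminus L^*)$, which gives $w(J\setminus J^*_i) \le w(L^*) < w$.

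Next we bound the heavy part of $\pi$. The leaves outside $\bigcup_{k\ge i} \ldots$ are arranged so that every round trip whose return carries weight at least $w$ belongs to the tail $\bigcup_{k\le i} J^*_k$. There are two exceptions, the boundary leaf and overlap effects, and both are absorbed in what follows. Consequently
\[
D^\pi(w) \le \sum_{k\le i} D(J^*_k) \le \sum_{k\le i} 2^k < 2^{i+1}.
\]
By minimality of $i$ we have $2^{i-1} < D(J\setminus L^*) = D^{\pi^*}(w)$. Combining the two bounds gives $D^\pi(w) < 2^{i+1} < 4 D^{\pi^*}(w)$, which is the claimed factor $4$.

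The main obstacle is the second step: proving that all heavy round trips of $\pi$ fall in the tail $\bigcup_{k\le i}J^*_k$. The sets $J^*_k$ are not nested, so the prefix of $\pi$ preceding the tail is $J\setminus\bigcup_{k\le i}J^*_k \subseteq J\setminus J^*_i$. Its weight is therefore at most $w(J\setminus J^*_i) < w$, so every round trip in that prefix returns with weight less than $w$. This inclusion argument should settle the step.

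The remaining delicate points are the boundary cases. One is $D^{\pi^*}(w)=0$, where $i$ should be taken as $0$, or the scaling $d_v\ge1$ applied so that empty sets are handled correctly. The other is a possible off-by-one in the indexing of $k$ starting from $0$. Both are addressed with the same inclusion argument.
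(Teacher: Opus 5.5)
Your proof is correct. It uses the same doubling skeleton as the paper: the heavy part of $\pi$ lies in the tail $\bigcup_{k\le i}J^*_k$, of length at most $2^{i+1}-2$, while $D^{\pi^*}(w)>2^{i-1}$. The difference is that you choose the index $i$ from the other side.

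The paper takes $i$ minimal with $w(J^*_i)\ge w(J)-w$. The tail containment then comes from the definition of $i$, and the paper uses the maximality of $J^*_{i-1}$ in a contradiction argument to get $D^{\pi^*}(w)>2^{i-1}$. You take $i$ minimal with $D^{\pi^*}(w)\le 2^i$. This makes the lower bound immediate, with the scaling $d_v\ge 1$ guaranteeing that $i\ge 1$ is one of the computed indices. You then spend the knapsack optimality on the upper bound instead, through the feasibility of $J\setminus L^*$ for $J^*_i$.

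Your version buys strictness. Since $w(L^*)<w$, you get $w(J\setminus J^*_i)<w$, so no leaf in the prefix can return with accumulated weight exactly $w$. The paper's choice only yields prefix weight $\le w$, which at the boundary clashes with the ``at least $w$'' convention in the definition of $D^{\pi}(w)$. For the same reason, your sentence about ``two exceptions'' (boundary leaf, overlap effects) should be dropped: your inclusion argument shows there are none.

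One correction to your closing remarks. The case $D^{\pi^*}(w)=0$ is not handled by the inclusion argument or by taking $i=0$. It occurs exactly when $w>w(J)$, and then $D^{\pi}(w)=D^{\pi^*}(w)=0$, so the strict inequality is simply false. This case must be excluded, as the paper's proof does by restricting to $0<w<w(J)$. Only $D^\pi(w)\le 4D^{\pi^*}(w)$ survives there, which is all the cost lemma needs. Similarly, at $w=0$ your inequality $w(L^*)<w$ fails, but the claim is trivial there because $D^\pi(0)=D^{\pi^*}(0)=D>0$.
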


\begin{proof}    
Let $0<  w < w(J)$ and let $i$ be minimal such that $w(J^*_i) \geq w(J) - w$.  
By construction of $\pi$, leaf nodes $j \in \bigcup_{k=1}^i J^*_k$ have an accumulated weight greater than $  w$ when they are visited. Thus,
\[
D^\pi(w) < \sum_{k=1}^{i} D(J^*_k) \leq \sum_{k=1}^{i} 2^k = 2^{i+1} - 2. \tag{1}
\]  
When \( 0 < w < w(J) \), it holds that \( D^*(w) \geq 2 \), because we scale the distances so that the minimum distance \( d_{\rightarrow v} \geq 1 \) for every \( j \in J \). 

In case $i = 1$, the claim is trivially true, and thus $D^*(w) \ge D^\pi(w) =2$. 

Suppose $i \geq 1$, then by our choice of $i$, it holds that $w(J^*_{i-1}) < w(J) - w$. Therefore, in any sequence $\pi'$, the total distance (start and end with center) 
during which the accumulated weight on the return trip to the center is at least \( w \) is larger than $2^{i-1}$, because otherwise we get a contradiction to the maximality of $w(J^*_{i-1})$.

That is, $D^*(w) > 2^{i-1}$. Together with (1), this concludes the proof.
\end{proof} 

By combining the previous two lemma one obtains:

\begin{theorem}
    Our algorithm computes an $8$-approximation for the weighted TSP problem on star graphs.
\end{theorem}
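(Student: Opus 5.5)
The theorem follows by chaining the two preceding lemmas, so the plan is mostly to check that their hypotheses and quantifiers line up. First, by the reduction at the start of this subsection, we may assume every tour starts and ends at the center and collects the center's weight last. The tour's cost is then the sum over round-trip segments of their length times $f$ evaluated at the accumulated weight. Under this normalization, the tour $\pi$ output by the algorithm and an optimal tour $\pi^*$ are both sequences of round trips over the same leaf set $J$. By the Observation, both have the same total distance $D(J)$ and the same final weight $w(J)$, so the functions $D^\pi(\cdot)$ and $D^{\pi^*}(\cdot)$ are defined on the same range of thresholds and can be compared pointwise.

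The key step is to invoke Lemma~\ref{lem:fourtimesdistance}. It gives $D^\pi(w) < 4\,D^{\pi^*}(w)$ for every $w \ge 0$. This is exactly the hypothesis of the preceding lemma with $c=4$. I would check the boundary thresholds separately. For $w=0$ both sides equal $D(J)$, and for $w > w(J)$ both are $0$, so the inequality $D^\pi(w)\le 4D^{\pi^*}(w)$ holds trivially in the non-strict form the lemma needs. Applying the preceding lemma then gives $T(\pi) \le 2\cdot 4\cdot T(\pi^*) = 8\,T(\pi^*)$.

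Finally, I would note that undoing the normalization preserves the ratio, or at least does not hurt it. Converting a leaf-rooted tour into a center-rooted one adds the same cost to every tour with the given start, so comparing $\pi$ and $\pi^*$ within one fixed start class loses nothing. I would also remark that scaling distances so that every $d_v \ge 1$ multiplies all costs by the same factor, so the ratio is unchanged.

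The main obstacle is the hidden step inside the preceding lemma: the passage from the pointwise bound $D^\pi(w) \le 2c\,D^*(w)$ to a bound on the total cost, which the paper delegates to Epstein et al. I would make this explicit by a layer-cake argument for monotone $f$. Write the cost of any tour as $\int f'(w)\,D^{\pi}(w)\,dw$ plus the term $f(0)\,D(J)$, or as the equivalent Stieltjes sum over the jumps of a step function $f$. Since $f$ is monotone, every integrand is nonnegative, so a pointwise domination of the distance profiles integrates to the same factor. The factor $2$ comes in because $D^*(w)$ discards the segment that crosses the threshold. This monotonicity argument is where care is needed.

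A second delicate point is the polynomial running time. Step 2 of the algorithm is a knapsack problem, so in general it can only be solved approximately. That is presumably why the introduction states the guarantee as $8+\varepsilon$. For the theorem as written, I would assume exact knapsack solutions, which give a pseudo-polynomial algorithm, and remark that an FPTAS for knapsack yields $8+\varepsilon$ in polynomial time.
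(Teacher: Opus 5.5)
Your proposal is correct and follows the paper's proof. The paper's proof consists exactly of plugging $c=4$ from Lemma~\ref{lem:fourtimesdistance} into the preceding lemma to obtain the factor $2\cdot 4=8$. Your extra checks (boundary thresholds, scale invariance, and the layer-cake justification of the step the paper delegates to Epstein et al.) go beyond the paper, with one refinement: the exact layer-cake identity holds for the leg-by-leg distance profile rather than for $D^{\pi}$ itself, since each outbound leg is traversed with the pre-collection weight, and bounding the optimal tour's leg profile below by $D^{\pi^*}(w)-d_v\ge \frac{1}{2}D^{\pi^*}(w)$ is precisely where the factor $2$ enters.
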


However, in Lemma~\ref{lem:fourtimesdistance}, we implicitly ignored the computational complexity of the step asserting that the total distance (starting and ending at the center) for which the accumulated weight on the return trip to the center is at least $w$ exceeds $2^{i-1}$. In fact, Step~(2) of the algorithm relies on a subroutine that solves a knapsack-type problem, namely maximizing the total collected weight of a subtour subject to a length constraint. This problem is NP-hard and admits only a $(1+\varepsilon)$-approximation in polynomial time \cite{IbarraK75}.

Therefore, when restricting attention to polynomial-time algorithms, the above result should be stated as follows.

\begin{theorem}
 There exists an algorithm that computes an $(8+\varepsilon)$-approximation in polynomial time for the weighted TSP problem on star graphs.
\end{theorem}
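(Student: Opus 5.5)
The plan is to rerun the argument for the $8$-approximation theorem, replacing each exact knapsack solution in Step~(2) by a polynomial-time approximate one, and to track how the approximation error propagates through Lemma~\ref{lem:fourtimesdistance} and the preceding $2c$-lemma. After scaling, the number of thresholds is $\lceil \log D(J)\rceil$, which is polynomial in the input size. For each $i$, Step~(2) is a $0/1$ knapsack instance: item $v$ has size $2d_v$ and profit $w_v$, and the capacity is $2^i$. We need an approximation that keeps feasibility exactly and loses only in the profit. A standard FPTAS loses a $(1-\varepsilon)$ factor in weight, but the proof of Lemma~\ref{lem:fourtimesdistance} uses the maximality of $w(J^*_{i-1})$ in weight. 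So instead I would use a \emph{dual} (resource-augmented) approximation. It is obtained by rounding the distances, which are the knapsack sizes, to multiples of $\varepsilon 2^i/n$ and running the exact dynamic program over the rounded sizes. This returns, in polynomial time, a set $\tilde J_i$ such that $D(\tilde J_i)\le (1+\varepsilon)2^i$ and $w(\tilde J_i)\ge \max\{w(J'): D(J')\le 2^i\}$. Together with enumerating a constant number of the largest items, this is the Ibarra--Kim style guarantee the paper cites.

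With the sets $\tilde J_i$ in hand, I would redo Lemma~\ref{lem:fourtimesdistance}. Fix $w$ and take $i$ minimal with $w(\tilde J_i)\ge w(J)-w$.
\begin{itemize}
\item \emph{Upper bound.} Inequality (1) becomes $D^\pi(w)<\sum_{k\le i}(1+\varepsilon)2^k<(1+\varepsilon)2^{i+1}$.
\item \emph{Lower bound.} Minimality gives $w(\tilde J_{i-1})<w(J)-w$. Since $\tilde J_{i-1}$ weighs at least as much as every set of total distance at most $2^{i-1}$, every set of weight at least $w(J)-w$ has total distance greater than $2^{i-1}$. Hence $D^*(w)>2^{i-1}$ exactly as before.
\end{itemize}
Combining the two bounds yields $D^\pi(w)<4(1+\varepsilon)D^{\pi^*}(w)$. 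The base case $i=1$ goes through unchanged after scaling.

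Plugging $c=4(1+\varepsilon)$ into the $2c$-lemma gives an approximation ratio of $8(1+\varepsilon)$. Rescaling $\varepsilon$ to $\varepsilon/8$ then gives the stated $(8+\varepsilon)$ bound. The running time is polynomial, since there are $O(\log D(J))$ knapsack calls, each polynomial in $n$ and $1/\varepsilon$.

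The main obstacle is the direction of the approximation. A plain weight-losing FPTAS does not directly support the maximality argument. If one insists on it, I would instead use the FPTAS weight guarantee $w(\tilde J_{i-1})\ge(1-\varepsilon)\,\mathrm{OPT}_{i-1}$. One would then have to argue with the perturbed threshold $w(J)-w$, comparing against $D^*$ at a slightly larger weight, which the monotonicity of $f$ makes delicate. I would therefore commit to the size-rounding (capacity-augmentation) variant, where the only subtlety is checking that the rounding error is at most $\varepsilon 2^i$ per level.
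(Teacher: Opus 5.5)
Your proof is correct, and it is more careful than the paper's. The paper's entire argument for this theorem is a remark. It says Step~(2) is a knapsack problem admitting a $(1+\varepsilon)$-approximation (Ibarra--Kim), so the ratio $8$ becomes $8+\varepsilon$. It never revisits Lemma~\ref{lem:fourtimesdistance}. You notice that the lower-bound step of that lemma needs $w(\tilde J_{i-1})$ to dominate the \emph{exact} optimum for budget $2^{i-1}$. You secure this with a capacity-augmenting (dual) guarantee obtained by rounding distances, so the only change is a factor $(1+\varepsilon)$ in inequality~(1).

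The paper's route buys brevity, but your concern is real: with only a profit guarantee, the lower bound $D^{\pi^*}(w)>2^{i-1}$ can fail arbitrarily badly. For example, take $m\ge 1/\varepsilon$ and integers $a<K$, with $a$ large enough that all $d_v\ge 1$, and use these leaves:
\begin{itemize}
\item $x_1,\dots,x_m$, each with $D(x_k)=2^a/m$ and weight $\delta/m$;
\item one leaf $c$ with $D(c)=2^K$ and weight $H>\delta/(1-\varepsilon)$.
\end{itemize}
Suppose the oracle returns subsets of $\{x_1,\dots,x_{m-1}\}$ at every level below $K$, of weight $(1-1/m)\delta$ at levels $a,\dots,K-1$. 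It then returns $\{c\}$ at level $K$ and $J$ at the top. This oracle meets the $(1-\varepsilon)$ profit guarantee. Yet the resulting tour visits $x_m$, then $c$, then the remaining $x_k$. Now let $f(u)=0$ for $u<H+\delta/m$ and $f(u)=1$ otherwise. The leg leaving $c$ already costs at least $d_c=2^{K-1}$. Visiting $c$ first and then all $x_k$ costs at most $2^a$. So a profit-approximate knapsack routine does not by itself justify the theorem, and your dual approximation is exactly the missing ingredient.

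Three small remarks on your write-up:
\begin{itemize}
\item Enumerating the largest items is unnecessary for the size-rounding approach. This dual guarantee is also not what the Ibarra--Kim profit-rounding FPTAS provides, which is precisely your later point.
\item Say explicitly that sizes are rounded \emph{down}. This keeps every budget-$2^i$ set feasible, and the overshoot is at most $n\cdot\varepsilon 2^i/n=\varepsilon 2^i$.
\item Set the top-level set to $J$ explicitly, since the dual oracle may omit zero-weight leaves. In the base case you now have $D(\tilde J_1)\le 2(1+\varepsilon)$ rather than $2$. This is still fine because $D^{\pi^*}(w)\ge 2$, or you can simply take $\tilde J_0=\emptyset$ and run the general argument.
\end{itemize}
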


\subsection{Path metric: DP algorithm} 

 In this section, we consider a path metric with set of nodes $\{v_1, v_2, ... \}$ ordered from left to right. We define the metric on these nodes by $d(v_i, v_{i+1})$ for all $i\in [n-1]$, and $d(v_i, v_j)=d(v_j, v_i)=\sum_{h=i}^{j-1} d(v_h, v_{h+1}) $ for all $i<j$.   

We begin by solving the problem under the assumption that the starting node is fixed, and its weight is collected at the end of the traversal.
Let \( t \) denote the starting node, where \( t \in \{ v_1, v_2, \dots, v_n \}  \).
Our problem involves \( n+1 \) nodes, \( \{ v_1, v_2, \dots, v_n \} \cup \{ t \} \), where \( \{ v_1, v_2, \dots, v_n \} \) are ordered from left to right. We will present a dynamic programming algorithm that computes an optimal solution in polynomial time. First we may observe that it does not make any sense to visit a node (except the starting node) when it is surrounded by two unvisited nodes because the agent will also pass by this node at a later point in time:

\begin{lemma}
Given a set of uncollected nodes arranged on a path, there exists an optimal tour (i.e., a node permutation) such that for all \( i < j < k \), the tour visits either \( v_i \) or \( v_k \) before visiting \( v_j \).
\end{lemma}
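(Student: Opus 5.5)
We will prove the statement by an exchange argument. Because weights only accumulate, reordering the collection of a node can move the moment its weight is added only forward in time, and delaying a collection never increases cost. The key point is that the cost depends only on the \emph{collection order} $\pi$. The physical route between consecutive collected nodes is the shortest path on the line, which passes over intermediate nodes without collecting them. Since $f$ is monotone nondecreasing, collecting a node later (while keeping the physical route the same) can only decrease the weight carried on every segment.

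Start from any optimal tour $\pi^*$ and suppose it violates the property. Then there are indices $i<j<k$ such that $v_j$ is collected before both $v_i$ and $v_k$. Among all optimal tours, choose one minimizing some potential; a convenient choice is the number of such violating triples, or equivalently the sum over nodes of the number of "surrounded-when-visited" events. Let $v_j$ be such a surrounded node (not the start $t$, whose weight is collected last by assumption). Modify $\pi^*$ by removing $v_j$ from its position and reinserting it immediately after the later of the first visits to the nearest uncollected nodes on its left and right. Call the resulting collection order $\pi'$.

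We compare physical routes as follows. In $\pi^*$, the route after $v_j$ must eventually reach both some node left of $v_j$ and some node right of $v_j$. Hence it passes through the location of $v_j$ again at or before the moment it first reaches the second side. Deleting the stop at $v_j$ from the route does not lengthen any leg, by the triangle equality on a line: $d(a,v_j)+d(v_j,b)\ge d(a,b)$. Inserting $v_j$ at a moment the route physically passes over it adds zero distance. Thus the multiset of traversed unit segments in $\pi'$ is a subset of those in $\pi^*$. Each traversed segment carries weight no larger than in $\pi^*$, since $w_j$ is added later, and all other collections are unchanged or moved earlier only in the sense of the deletion. Therefore $T(\pi')\le T(\pi^*)$ segment by segment, using monotonicity of $f$ and nonnegativity of $f$.

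The main obstacle is bookkeeping: we must show that the potential strictly decreases and that the move does not create new violations involving $v_j$. To handle this, we reinsert $v_j$ at the moment the route passes it once one side is exhausted locally, that is, just after the first collection of a node on the second side beyond $v_j$. At that point one of its neighbors has already been visited, so $v_j$ is no longer surrounded. The relative order of the other nodes is unchanged, so no other triple changes status. Iterating yields an optimal tour with the stated property.
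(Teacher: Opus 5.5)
\textbf{The progress measure does not decrease.} Your cost comparison is fine as long as $v_j$ is reinserted between two consecutive collections that lie on opposite sides of it. The gap is in showing that the potential goes down. Being surrounded means having \emph{some} uncollected node on each side. So the fact that a neighbor of $v_j$ has been collected does not make $v_j$ unsurrounded. Your claim that no other triple changes status is also false: triples that have $v_j$ as an \emph{endpoint}, and whose middle node is collected in the window you jump over, do change. Concretely, take $v_1,\dots,v_8$ with collection order $v_3,v_4,v_5,v_2,v_1,v_6,v_7,v_8$ and $v_j=v_3$. The pass-over placement gives $v_4,v_5,v_3,v_2,v_1,v_6,v_7,v_8$, in which $v_3$ is still surrounded (by $v_1$ and $v_6$). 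The set of surrounded nodes is $\{v_2,\dots,v_5\}$ both before and after, and the number of violating triples rises from $27$ to $30$; for instance $(3,4,6)$ becomes violating. Your literal rule does no better. It inserts $v_3$ right after $v_2$, which is both the later-visited nearest neighbor and the first collection on the second side. Then $v_3$ is again surrounded, and $v_2\to v_3\to v_1$ is not a pass-over but costs an extra $2d(v_2,v_3)$. So neither potential is shown to decrease, and the minimality argument does not close.

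\textbf{The fix, which is the paper's argument.} Postpone $v_j$ until one side is \emph{completely} collected. Assume w.l.o.g.\ that the last collected node lies left of $v_j$. Let $v_k$ be the last collected node to the right of $v_j$, and let $v_i$ be its successor, which necessarily lies left of $v_j$. Move $v_j$ to between $v_k$ and $v_i$. The cost does not increase, for three reasons: $d(v_k,v_j)+d(v_j,v_i)=d(v_k,v_i)$; $d(p,q)\le d(p,v_j)+d(v_j,q)$ for the former predecessor $p$ and successor $q$ of $v_j$; and all legs in between carry $w_j$ less. Now $v_j$ is no longer surrounded. Every node $x$ collected in the skipped window keeps its status, because an uncollected node already witnesses the side where $v_j$ now sits: $v_i$ does so if $x$ lies right of $v_j$, and $v_k$ does so if $x$ lies left of $v_j$. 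Hence the \emph{number of surrounded nodes} drops by exactly one. This per-node count is essential, so your two potentials are not interchangeable: even under this move the number of violating triples can increase. For example, for the order $v_2,v_3,\dots,v_{10},v_1$ it goes from $36$ to $56$.
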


\begin{proof}
    For every $i <j < k$ we say that $v_j$ is visited prematurely if it is visited before $v_i$ and $v_k$. Let $\pi^*$ be an optimal tour minimizing the number of premature visits. Let us assume that there still exists a node $v_j$ that is visited prematurely by $\pi^*$. Let us assume w.l.o.g.\ that the last node visited by $\pi^*$ lies to the left of $v_j$. Let $v_k$ be the last node to the right of $v_j$ that is visited by $\pi^*$ and let $v_i$ be the node visited after $v_k$. Then we know that $v_i$ lies to the left of $v_j$. Additionally, $v_k$ is visited after $v_j$ because $v_j$ was visited prematurely. 
    
    Let us consider the tour $\pi'$ where we remove the initial visit of $v_j$ from $\pi^*$ and visit $v_j$ after $v_k$ instead. One might note that $\pi'$ is not longer than $\pi^*$ because $d(v_k,v_j) + d(v_j,v_i) = d(v_k,v_i)$. Additionally, the segments of $\pi^*$ and $\pi '$ before $\pi^*$ visits $v_j$ and the segments after $\pi'$ visits $v_j$ are exactly the same. During the segment in between the visits of $v_j$ by the two tours the only difference is that on tour $\pi^*$ the agent has already picked up the weight of $v_j$ while on $\pi '$ this is not the case which means that the agent moves at least as fast on $\pi'$ as on $\pi^*$. As a result we have that $T(\pi') \leq T(\pi^*)$. Since $\pi'$ does not visit $v_j$ prematurely anymore this contradicts that $\pi^*$ minimizes the number of premature visits among all optimal tours.
\end{proof}

One might note that any tour that visits no node that is surrounded by two unvisited nodes can at any point in time only visit either the leftmost or the rightmost node. Thus:

\begin{corollary}\label{cor:leftorrightfirst}
On a path metric there exists an optimal tour, where at any point in time, the next node selected from the remaining unvisited (i.e., uncollected) nodes on the path is either the leftmost or the rightmost among the remaining unvisited nodes.
\end{corollary}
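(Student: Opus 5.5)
We derive Corollary~\ref{cor:leftorrightfirst} directly from the preceding lemma. Fix an optimal tour $\pi^*$ such that for all $i<j<k$, the tour visits $v_i$ or $v_k$ before $v_j$; the lemma guarantees such a tour exists. We then show that, at every step, the node $\pi^*$ collects next is an extreme node of the current set of unvisited nodes.

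Consider any moment in the tour. Let $U$ be the set of nodes not yet visited (collected), and let $v_j \in U$ be the next node that $\pi^*$ visits. Suppose, for contradiction, that $v_j$ is neither the leftmost nor the rightmost node of $U$. Then there are indices $i<j<k$ with $v_i,v_k\in U$, meaning both are still unvisited when $v_j$ is visited. So $v_j$ is visited before both $v_i$ and $v_k$. This contradicts the property of $\pi^*$ applied to the triple $i<j<k$. Hence the next node is always the leftmost or the rightmost element of $U$, and since this holds at every step, $\pi^*$ is the required tour.

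The only point needing care is the starting node $t$. Its weight is collected at the end, so it is not an ``uncollected node'' during the traversal; the ordering statement concerns only $v_1,\dots,v_n$. I would note that the lemma's exchange argument treats the start as fixed and moves only visits of non-start nodes, so its conclusion applies to the remaining nodes exactly as used above. No further obstacle is expected: the corollary is a pure reformulation of the lemma's ``no premature visit'' property.
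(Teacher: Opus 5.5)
Your argument is correct and takes the same route as the paper. The paper also reads the corollary off the preceding lemma: a next node that is neither the leftmost nor the rightmost unvisited node lies between two still-unvisited nodes, which the lemma's no-premature-visit property rules out (your remark excluding the start node $t$ matches the paper's ``except the starting node'' caveat).
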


This structural property forms the foundation of our dynamic programming algorithm. The core idea is to solve subproblems on subpaths and then combine their solutions. 
For each subpath, we consider a set of uncollected nodes within the interval \( [a, b] \subseteq [v_1, v_n]\), where \( a \) and \( b \) denote the leftmost and rightmost nodes, respectively. Let \( s \) be the starting node of the current traversal, and let \( t \) be the ending node, whose weight will be collected at the end of the tour.
For this subproblem, the total weight collected prior to visiting the nodes in the interval \( [a, b] \) is equal to the sum of the weights of all nodes located outside this interval, excluding the end node \( t \) (whose weight is collected at the end).  
Based on the lemma, we know that the first node collected in any subpath must be either \( a \) or \( b \), which allows us to recursively build the optimal tour by evaluating both options and choosing the one with minimum cost. 

In the following, we formally define the subproblems and apply dynamic programming to solve the problem over the entire path.

 Let \( g(v_i, v_j, s, t) \) represent the optimal value of the subproblem defined on the subpath \( [v_i, v_j] \), under the condition that all nodes except those on the subpath \( [v_i, v_j] \) have already been visited, which means that the respective weights have been picked up. Here, \( s \in \{ v_{i}, v_{j} \} \) denotes the currently collected node, and \( t \) is the ending node, which will be collected last. The function is computed for every pair of nodes \( v_i \) and \( v_j \) where \( v_i, v_j \in \{ v_1, v_2, \dots, v_n \} \) are the nodes on the path from \( v_1 \) to \( v_n \).  

There are $\sum^n_{i=1} \sum_{j=i}^{n} 2=n^2 + n $ 
such relevant values, in which 
for each interval of $[x_i, x_j]$ with $i\le j$, there are $2$ choices for the starting nodes.

When \( v_i = v_j \), the subpath consists of a single node. In this case, the tour starts at \( v_i \), collects its weight, and then returns to \( t \). In this case, $
g(v_i, v_j, s, t) = d(v_i, t) \cdot f(w),
$
where \( w = \sum_{i=1}^n w_i \) is the total collected weight.

We are now ready to introduce the recurrence relation.   
Consider the subproblem \( g(v_i, v_j, s, t) \), defined on the interval \( [v_i, v_j] \) with $i<j$, where the remaining uncollected nodes lie within the interval, the traversal starts at node \( s \in \{v_i, v_j\} \), and ends at node \( t \). 
The collected weights before picking nodes in interval $[v_i, v_j]$ is $w=\sum_{\ell \in [1, i)} w_\ell + \sum_{\ell \in (j, n]} w_\ell $.  
To compute \( g(v_i, v_j, s, t) \), we consider two possible cases: 
\begin{itemize}
\item  If \( s = v_i \) and the next visited node is \( v_{i+1} \), then the cost consists of the cost to visit \( v_{i+1} \) from \( s \) and the optimal cost of collecting all remaining nodes in the interval \( [v_{i+1}, v_j] \). Thus, we have:
\[
g(v_i, v_j, s, t) = f(w + w_i) \cdot d(v_i,v_{i+1}) + g(v_{i+1}, v_j, v_{i+1}, t).
\]
Similarly,  if \( s = v_j \) and the next visited node is \( v_{j-1} \), 
\[
g(v_i, v_j, s, t) = f(w + w_j) \cdot d(v_j,v_{j-1}) + g(v_{i}, v_{j-1}, v_{j-1}, t).
\]

\item  If \( s = v_i \) and the next visited node is \( v_{j} \), then the cost consists of the cost to visit \( v_{j} \) from \( s \) and the optimal cost of collecting all remaining nodes in the interval \( [v_{i+1}, v_{j}] \). Thus, we have:
\[
g(v_i, v_j, s, t) = f(w+ w_i) \cdot d(v_i, v_j) + g(v_{i+1}, v_j, v_{j}, t).
\]
Similarly,  if \( s = v_j \) and the next visited node is \( v_{i} \), 
\[
g(v_i, v_j, s, t) = f(w+ w_j) \cdot d(v_j, v_i) + g(v_i, v_{j-1}, v_i, t).
\]
\end{itemize}
 
According to Corollary~\ref{cor:leftorrightfirst}, the optimal tour for the subproblem \( g(v_i, v_j, s, t) \) begins by visiting either the leftmost node \( v_i \) or the rightmost node \( v_j \) of the interval \( [v_i, v_j] \). If $ s = v_i $, then 
\begin{align*}
 g(v_i, v_j, s, t) = \min  \{ f(w + w_i) \cdot d(v_i,v_{i+1}) +\\ g(v_{i+1}, v_j, v_{i+1}, t), 
f(w + w_i) \cdot d(v_i, v_j) + g(v_{i+1}, v_j, v_j, t)
\}.
\end{align*}
 If $ s = v_j $, then 
\begin{align*}
 g(v_i, v_j, s, t) = \min  \{f(w + w_j) \cdot d(v_j,v_{j-1}) +\\ g(v_{i}, v_{j-1}, v_{j-1}, t), 
f(w+ w_j) \cdot d(v_j, v_i) + g(v_i, v_{j-1}, v_i, t)
\}.
\end{align*}

Using dynamic programming, we recursively compute \( g(v_i, v_j, s, t) \) for all $n^2+n$ subproblems and ultimately obtain \( g(v_1, v_n, s, t) \) for \( s \in \{v_1, v_n\} \). The optimal tour is then given by the minimum of the following two values: $g(v_1, v_n, v_1, t) + f(0) \cdot d(t, v_1)$ and $ g(v_1, v_n, v_n, t) + f(0) \cdot d(t, v_n)$.

\begin{theorem} When the metric is a path graph, then the W-TSP problem with a fixed starting node $t$ can be solved optimally in time $O(n^2)$.
\end{theorem}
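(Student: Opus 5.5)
The proof has two parts: correctness of the dynamic program built above, and its running time. The plan is to prove a slightly stronger claim by induction on the interval length $j-i$. For every interval $[v_i,v_j]$ and every $s\in\{v_i,v_j\}$, the value $g(v_i,v_j,s,t)$ equals the minimum cost of the following process: start at $s$, with all nodes outside $[v_i,v_j]$ (except $t$) already collected; collect every node of $[v_i,v_j]$, with $s$ collected immediately; then travel to $t$. The minimum is taken over all orders that obey the left-or-right structure of Corollary~\ref{cor:leftorrightfirst}.

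\textbf{Base case ($i=j$).} Only $s=v_i$ remains. After collecting it the agent carries the total weight and walks to $t$, giving cost $d(v_i,t)\cdot f(w)$. This matches the initialization.

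\textbf{Inductive step.} Fix $i<j$ and $s=v_i$; the case $s=v_j$ is symmetric. After collecting $v_i$, the carried weight is $w+w_i$, where $w$ is the weight collected outside the interval. The remaining uncollected set is $[v_{i+1},v_j]$. By the corollary, applied to the residual instance (the exchange argument of the preceding lemma works for any set of uncollected nodes), some optimal continuation next visits either $v_{i+1}$ or $v_j$.
\begin{itemize}
\item Going to $v_{i+1}$ costs $f(w+w_i)\,d(v_i,v_{i+1})$.
\item Going to $v_j$ costs $f(w+w_i)\,d(v_i,v_j)$.
\end{itemize}
After that move the state is exactly the subproblem $g(v_{i+1},v_j,\cdot,t)$ with the matching start endpoint. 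Its outside weight is $w+w_i$, which is consistent with the recurrence's definition of the collected weight for the smaller interval. The induction hypothesis gives its optimal value, so taking the minimum over the two options is exact. One bookkeeping point: the end node $t$ is a separate copy whose weight is excluded from all these sums, so $f(w)$ in the base case means the total over all cities.

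\textbf{Top level and optimality.} The tour leaves $t$ carrying weight $0$ and first collects either $v_1$ or $v_n$, again by Corollary~\ref{cor:leftorrightfirst}. This yields $\min_{s\in\{v_1,v_n\}} \{ f(0)\,d(t,s)+g(v_1,v_n,s,t) \}$. Every tour respecting the corollary is captured by some path through the recursion, and every path corresponds to a feasible tour. Since the corollary guarantees an optimal tour of that form, the computed value is optimal. Backtracking through the argmins reconstructs the tour.

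\textbf{Running time.} There are $n^2+n$ states. For each state, the outside weight $w$ is obtained in $O(1)$ from prefix sums of the weights. Each distance $d(v_i,v_j)$ is likewise a difference of prefix sums of edge lengths. Each recurrence therefore takes $O(1)$ evaluations of $f$ (assumed constant-time oracle calls). Evaluating the states in order of increasing interval length gives $O(n^2)$ total time.

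\textbf{Main obstacle.} The delicate step is justifying that the corollary applies to every residual subproblem rather than only to the original instance. Here the agent starts at an endpoint with nonzero carried weight, and the fixed start $t$ may lie inside the interval. I would handle this by rerunning the exchange argument of the preceding lemma verbatim on the residual instance. That argument only uses that $f$ is monotone and that, on a path, the detour through $v_j$ adds no length. Neither property depends on the initial weight or on where $t$ lies.
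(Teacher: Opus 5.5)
Your proposal is correct and follows the paper's argument: the exchange lemma with its left-or-right corollary, the interval DP $g(v_i,v_j,s,t)$ with the same base case, the two-option recurrence, the top-level minimum over $s\in\{v_1,v_n\}$ of $f(0)\,d(t,s)+g(v_1,v_n,s,t)$, and $n^2+n$ states with $O(1)$ work each; you just make the induction and the prefix-sum bookkeeping explicit. One simplification: because you define $g$ as a minimum over left-or-right orders, the recurrence is exact by plain induction and the corollary is needed only once, for the original instance at the top level, so the residual-instance issue you call the main obstacle does not actually arise (your verbatim exchange argument for it is nonetheless sound).
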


To find the optimal solution for the W-TSP problem, we consider all possible starting nodes $t$, compute the optimal tour for each, and select the best among them.

\begin{theorem} When the metric is a path graph, then the W-TSP problem with a freely chosen starting node can be solved optimally in time $O(n^3)$.
\end{theorem}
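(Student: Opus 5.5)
The statement follows from the fixed-start result by enumerating the starting node. In the W-TSP objective $T(\pi)$ the tour is a cyclic permutation $\pi=(\pi_1,\ldots,\pi_n)$, and the only role of $\pi_1$ is that it is where the tour begins and that its weight is collected last. Every feasible tour therefore has exactly one starting node $t=\pi_1\in\{v_1,\ldots,v_n\}$. Its cost $T(\pi)$ is precisely the cost of the same tour viewed as a solution to the fixed-start instance with start $t$.

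Write $\mathrm{OPT}_t$ for the optimal fixed-start value with start $t$. Since the feasible tours are partitioned by their starting node,
\[
\mathrm{OPT} = \min_{t\in\{v_1,\ldots,v_n\}} \mathrm{OPT}_t .
\]

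The algorithm loops over the $n$ candidates $t$. For each $t$ it runs the dynamic program of the previous theorem, which yields the value
\[
\min\{g(v_1,v_n,v_1,t)+f(0)\, d(t,v_1),\; g(v_1,v_n,v_n,t)+f(0)\, d(t,v_n)\}
\]
together with a tour attaining it, computed via back-pointers in the DP table. It then returns the cheapest of these $n$ tours.

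Correctness needs two facts.
\begin{itemize}
\item The tour returned for $t$ is a genuine tour starting at $t$ whose cost under $T$ equals the DP value, so the output is feasible.
\item $\mathrm{OPT}_t$ is attained by this DP, by the previous theorem, which rests on Corollary~\ref{cor:leftorrightfirst}.
\end{itemize}
Applied to the start node $\pi_1^*$ of an optimal tour $\pi^*$, the second fact shows that the minimum over $t$ is at most $T(\pi^*)$. Hence the returned tour is optimal.

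For the running time, each fixed-start DP takes $O(n^2)$ time. The table has $n^2+n$ entries and each entry is a minimum of two terms. The prefix sums $\sum_{\ell<i}w_\ell$ and $\sum_{\ell>j}w_\ell$, and the distances $d(v_i,v_j)$, can be precomputed in $O(n)$ time via prefix sums of edge lengths, so each entry costs $O(1)$. Running the DP for all $n$ choices of $t$ gives $O(n^3)$ in total. Reconstructing the best tour adds only $O(n)$.

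The only point needing care is the handling of $t$ inside the DP, since $t$ also lies on the path among $v_1,\ldots,v_n$. The model treats $t$ as an extra $(n+1)$-st node whose weight is excluded from the interval computations and added only at the final return. We must therefore ensure that $w$ in each subproblem excludes $w_t$. We must also ensure that visiting the copy of $t$ on the path costs no more than in the original formulation. A zero-weight copy of $t$ is harmless here: $f$ is monotone and $d(t,t)=0$, so this interpretation does not change the optimum. Once this bookkeeping is fixed, the argument is a direct reduction to the fixed-start theorem.
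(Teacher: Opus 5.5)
Your proposal is correct and matches the paper's argument: enumerate all $n$ candidate start nodes $t$, run the $O(n^2)$ fixed-start dynamic program for each, and return the cheapest tour, for $O(n^3)$ in total. Your extra bookkeeping with a zero-weight copy of $t$ covers a detail the paper leaves implicit, but the reason it is harmless is the triangle inequality on the path plus the copy weighing nothing (deleting the copy from a DP tour cannot increase its cost, so the returned tour costs at most the DP value rather than exactly it), not the monotonicity of $f$.
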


\subsection{Linear speed decrease: An $O(\log(n))$-approximation algorithm}

\begin{figure}[t]
    \centering
    \includegraphics[width=8cm]{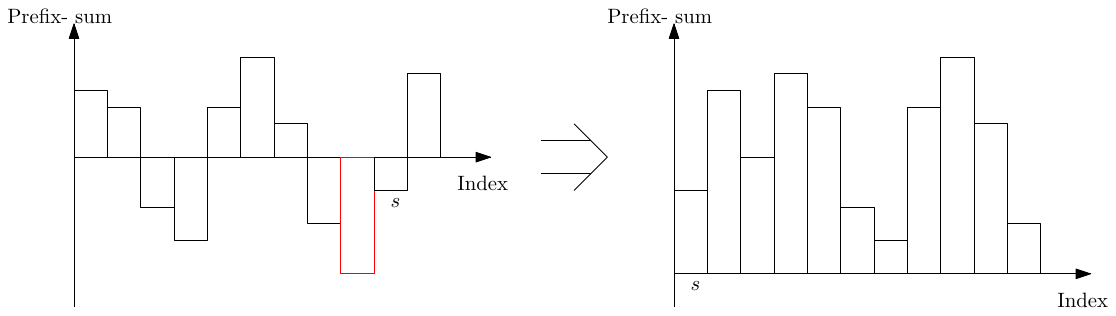}
    \caption{The idea to find a suitable index $s$. One can simply compute the Prefix sums starting with index $1$ (left picture) and find the index with the largest negative value (red bar). By choosing $s$ to be the index right after it one obtains a solution where every prefix sum is positive (right picture).}
    \label{fig:start}
\end{figure}

In the following we consider the case that the cost function $f(w)$ corresponds to a linear speed decrease. To be more precise we are given an initial speed $\nu_{max}$ and another speed value $\nu_{min}$ once we have collected up all items. Then $f(w) = \frac{1}{\nu_{max} - \nu}$, where $\nu = \frac{\nu_{max} - \nu_{min}}{\sum_{i=1}^n w_i}$. For any constant $\delta \in \mathbb{R}_{\geq 0}$ the value $\delta \cdot f(w) = \frac{\delta}{\nu_{max} - \nu w}$ can be interpreted as the time the agent needs if it travels a distance of $\delta$ with speed $\nu_{max} - \nu w$. Our objective becomes minimizing the travel time of the agent while its speed decreases linearly in the weight it has collected up. One may note that this is the standard cost function of the traveling thief problem. 

We will prove that if the first city to visit (i.e. the start node of the tour) can be chosen freely that we can obtain a $O(\log(n))$-approximation for this problem by calculating a constant approximation for the optimal TSP tour (ignoring the weights) and afterwards choosing a feasible start for this tour. This is based on the property that for any given TSP tour $\pi_1,...,\pi_n$ one can always choose a start such that the entire travel time is upper bounded by $O\left(\log(n) \frac{d(\pi)}{\nu_{max}} \right)$.

Scaling the instance, we assume w.l.o.g that $\sum_{i=1}^n w_i = n$, $\sum_{i=1}^n d(\pi_i,\pi_{i+1}) =n$ and $\nu_{max} = n$. Additionally we assume that $\nu_{min} = 0$ as a larger minimum speed will only reduce the gap between the travel time and the length of the TSP tour. Now we define for any $i \leq n$ the value $a_i = d(\pi_i,\pi_{i+1}) - w_{i+1} + \epsilon ( 1 - w_{i +1})$. By using that the average distance between two consecutive cities on the tour is identical to the average weight and that the average weight is $1$ one obtains that $\sum_{i=1}^n a_i = 0$. Thus there exists an $s \leq n$ such that for all $i \leq n$ it holds $\sum_{j = s}^{s + i -1} a_i \geq 0$, where $a_j = a_{j-n}$ for any $j > n$  \cite{lovasz2007combinatorial}. In Figure \ref{fig:start} it is shown how to find $s$.

Now we will prove that choosing $\pi_s$ as a starting point will actually give us the desired property. Intuitively the fact that the sum of the $a_i$ stays positive ensures that the ratio between the length of the tour that still has to be traveled and the weight the agent has not collected up yet (which is equal to the remaining speed) is only constant (this is ensured by the first part of the definition of $a_i$). Additionally it also ensures that the last weights to be collected up are not too tiny (second part of the definition of $a_i$) which would result in the agents speed approaching $0$ at the end. By formalizing and combining these two properties one can upper bound the duration of the tour by an integral that lies within $O(\log(n))$:

\begin{lemma}  If the agent starts at a city $\pi_s$ on the tour $\pi_1,...,\pi_n$, such that $\sum_{j = s}^{s + i -1} a_j \geq 0$ for all $i \leq n$, the entire duration of the tour is upper bounded by $(1 + \epsilon) (\log(n) + 1 -\log(\epsilon))$ for any $\epsilon>0$. 
\end{lemma}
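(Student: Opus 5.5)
The plan is to relabel so that $\pi_s=\pi_1$. We then bound each leg's travel time by the ratio of the distance still to be traveled to the remaining speed, and finally compare the resulting sum to the integral $\int dt/t$. Throughout we use the scaling fixed above: $\sum_i w_i=n$, $\sum_i d(\pi_i,\pi_{i+1})=n$, $\nu_{max}=n$, $\nu_{min}=0$. With these, $f(w)=1/(n-w)$, so the speed is exactly the weight not yet collected.

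\textbf{Setup.} Write $d_k=d(\pi_k,\pi_{k+1})$, with indices cyclic so that $d_n=d(\pi_n,\pi_1)$ and $w_{n+1}=w_1$. Leg $k$ is traveled while carrying $W_k=\sum_{j=2}^{k} w_j$. Its speed is therefore
\[
R_k=n-W_k=w_1+\sum_{j=k+1}^{n} w_j ,
\]
since the weight of $\pi_1$ is collected last, and the duration of the tour is $\sum_{k=1}^n d_k/R_k$. We also write $L_k=\sum_{j=k}^n d_j$ for the distance still to be traveled when leg $k$ begins, and $m_k=n-k+1$ for the number of remaining legs.

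\textbf{Step 1 (remaining speed vs.\ remaining distance).} Since $\sum_{j=1}^n a_j=0$, the hypothesis that all prefix sums from $s$ are nonnegative is equivalent to all suffix sums being nonpositive: $\sum_{j=k}^{n} a_j\le 0$ for every $k$. Expanding the definition of $a_j$, this reads
\[
L_k \le (1+\epsilon)\sum_{j=k+1}^{n+1} w_j-\epsilon\, m_k=(1+\epsilon)R_k-\epsilon\, m_k .
\]
Hence $R_k\ge (L_k+\epsilon m_k)/(1+\epsilon)$. This is exactly where the two parts of $a_j$ enter. The distance part keeps the speed comparable to the remaining distance. The $\epsilon$-part keeps the speed bounded away from $0$ near the end of the tour, which matters especially for the last leg, whose speed is only $w_1$.

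\textbf{Step 2 (telescoping to a logarithm).} Put $x_k=L_k+\epsilon m_k$ and $x_{n+1}=0$. Then $x_k-x_{k+1}=d_k+\epsilon\ge d_k$, so Step 1 gives
\[
\sum_{k=1}^n \frac{d_k}{R_k}\le(1+\epsilon)\sum_{k=1}^n\frac{x_k-x_{k+1}}{x_k}.
\]
For $k<n$ we use $(x_k-x_{k+1})/x_k\le\int_{x_{k+1}}^{x_k}dt/t$. The last term is at most $1$. Telescoping the integrals gives at most $\ln(x_1/x_n)$, and $x_n=d_n+\epsilon\ge\epsilon$. This yields the bound $(1+\epsilon)\bigl(1+\log x_1-\log\epsilon\bigr)$.

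\textbf{Main obstacle.} The delicate point is the constant in the last step. Since $x_1=n+\epsilon n$, the naive estimate gives $\log(n(1+\epsilon))$ rather than $\log n$. To remove the extra $\log(1+\epsilon)$, I would treat the first leg separately, using $R_1=n$ so that it costs at most $d_1/n$, and start the telescoping at $k=2$. Alternatively, $x_1$ can be bounded by $(1+\epsilon)R_1$ and the factor absorbed. Either way this only adjusts additive constants, and the $O(\log n)$ form of the bound is unaffected.

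The remaining checks are routine. One must verify that the cyclic indexing of $a_j$, which uses $w_{n+1}=w_1$, matches the convention that the start node's weight is picked up at the very end. One must also check that $\nu_{min}>0$ only increases all speeds, which justifies the reduction to $\nu_{min}=0$.
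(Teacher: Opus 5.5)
Your Step 1 is correct. It is the same inequality the paper uses, which the paper writes as $n-W_i\ge\bigl(n-l_i+\epsilon(n-i)\bigr)/(1+\epsilon)$. Step 2, however, does not give the stated constant.

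\textbf{Where the constant is lost.} You carry the full term $\epsilon m_k$, so your potential $x_k$ drops by $d_k+\epsilon$ on each leg while you only pay for $d_k$. It therefore starts at $x_1=(1+\epsilon)n$, and what you actually prove is
\[
(1+\epsilon)\bigl(\log n+\log(1+\epsilon)+1-\log\epsilon\bigr).
\]
Neither proposed repair removes the surplus $(1+\epsilon)\log(1+\epsilon)$:
\begin{itemize}
\item Starting the telescoping at $k=2$ leaves $x_2=(1+\epsilon)n-d_1-\epsilon$. This still exceeds $n$ whenever $d_1<\epsilon(n-1)$ (e.g.\ $d_1=0$), so the resulting bound $d_1/n+(1+\epsilon)\bigl(1+\log(x_2/\epsilon)\bigr)$ stays strictly above the target.
\item ``Bounding $x_1$ by $(1+\epsilon)R_1$'' is just the identity $x_1=(1+\epsilon)n$, and there is nothing for that factor to be absorbed into.
\end{itemize}
It is true that the difference is harmless for the $O(\log n)$ approximation. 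But as written, the lemma with its explicit bound is not established.

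\textbf{The fix.} Since $m_k\ge 1$, Step 1 gives $R_k\ge y_k/(1+\epsilon)$ with $y_k=L_k+\epsilon$ and $y_{n+1}=\epsilon$. Now $y_k-y_{k+1}=d_k$ exactly, and $y_{k+1}\ge\epsilon>0$ for every $k$, including the last leg. Hence
\[
\sum_{k=1}^n\frac{d_k}{R_k}\le(1+\epsilon)\sum_{k=1}^n\int_{y_{k+1}}^{y_k}\frac{dt}{t}=(1+\epsilon)\log\Bigl(1+\frac{n}{\epsilon}\Bigr).
\]
This is at most $(1+\epsilon)(\log n+1-\log\epsilon)$ whenever $\epsilon\le(e-1)n$. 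For $\epsilon>en$ the stated right-hand side is negative, so the statement implicitly assumes $\epsilon$ is not that large.

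\textbf{Comparison with the paper.} The paper discards the $\epsilon$-term altogether, getting speed at least $(n-l)/(1+\epsilon)$ after distance $l$. It integrates this over $[0,n-\epsilon]$ to get $(1+\epsilon)\log(n/\epsilon)$. It then pays at most $1+\epsilon$ for the final stretch of length $\epsilon$, using the final speed $w_1\ge\epsilon/(1+\epsilon)$ that follows from $a_n\le 0$. The repaired telescoping folds that separate end-of-tour step into the sum and gives a slightly smaller constant in the range $\epsilon<(e-1)n$.
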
 

\begin{proof}
W.l.o.g. we may assume that $s = 1$. Now for any $ i \leq n$ we define $l_i = \sum_{j = 1}^i d(\pi_j,\pi_{j+1})$ and $W_i = \sum_{j = 1}^i w_{j + 1}$. It holds for all $i$ that:
\begin{align*}
 &0 \leq \sum_{j = 1}^i a_j\\
\Rightarrow &0 \leq \sum_{j = 1}^i \left(d(\pi_j,\pi_{j+1}) - w_{j+1} + \epsilon ( 1 - w_{j +1})\right)\\
\Rightarrow &0 \leq l_i - (1 + \epsilon) W_i + \epsilon i\\
\Rightarrow &n - W_i \geq n- \frac{l_i}{1 + \epsilon}-\frac{\epsilon i}{1+\epsilon}\\
\Rightarrow &n - W_i \geq \frac{n+\epsilon n - l_i- \epsilon i}{1 + \epsilon}\\
\Rightarrow &n - W_i \geq \frac{n - l_i}{1 + \epsilon}
\end{align*}

One may note that $n - W_{i}$ is exactly the speed of the agent after it visited city $\pi_{i+1}$. We want to use this to lower bound the velocity, once the distance $l$ is traveled, independently from the exact position of the cities on the tour. Let $i$ be the maximum index still fulfilling that $l_i \leq l$. Then the speed at point $l$ is lower bounded by:
\begin{equation*}
n -W_{i-1} \geq n-W_i \geq \frac{n - l_i}{1 + \epsilon} \geq \frac{n - l}{1 + \epsilon}
\end{equation*}

Using this we can upper bound the duration until the agent traveled distance $n - \epsilon$ by the following integral:
\begin{align*}
\int_{0}^{n- \epsilon} \frac{1}{\frac{n - t}{1 + \epsilon}} dt &= \int_{0}^{n- \epsilon} \frac{1 + \epsilon}{n - t} dt\\
&= (1 + \epsilon) \int_{\epsilon}^n \frac{1}{t} dt\\
&= (1 + \epsilon) (\log(n) -\log(\epsilon))
\end{align*}

All that is left to consider is the duration to cover the last interval on the tour of length $\epsilon$. Since $0 \leq \sum_{j = 1}^{n - 1} a_j$ it holds that $0 \geq a_{n} = d(\pi_n,\pi_1) - w_1 + \epsilon (1-w_1)$ which implies $w_1 \geq \frac{\epsilon}{1 + \epsilon}$. One may note that $n-W_{n-1} = w_1$ is the final velocity of the agent. Thus the duration for the final part of the tour is upper bounded by $\epsilon/w_1 \leq 1 + \epsilon$ and the entire duration by $(1 + \epsilon) (\log(n) + 1 -\log(\epsilon))$.
\end{proof}

One may note that even if the agent would have traveled through the tour with maximum speed $n$ he would have needed duration $1$. Thus one can incorporate a weight function, corresponding to a linear speed loss, into a TSP tour while only losing a factor in $O(\log(n))$. By combining this approach with a constant approximation algorithm for metric TSP we directly obtain an $O(\log(n))$ approximation algorithm for weighted TSP with this kind of cost functions.

\section{Experiments}  
\label{sec5}
In our experiments, we utilize TTP benchmark instances introduced in \cite{DBLP:conf/gecco/PolyakovskiyB0MN14} that are built on the TSP instances from TSPLIB~\cite{DBLP:conf/gecco/PolyakovskiyB0MN14}. 
The instances are publicly available.\footnote{\url{https://cs.adelaide.edu.au/~optlog/TTP2017Comp/}}
Our benchmark set consists of Euclidean TSP benchmark instances with $n$ cities given as coordinates $(x_i, y_i)$, $1\leq i \leq n$ and Euclidean distances between the cities. For the TTP instances, items are placed at the different cities except that starting city.
We use TTP instances based on the TSP instance 
which different numbers of items per city, namely $1, 2, 5, 6, 10, 20$ resulting in TTP instances with $m = 50, 100, 150, 250, 300, 500, 1000$ items with $51$ cities. In addition, we use TTP instances based on the TSP instances which contain $101$ and $575$ cities. We use $1$ item per city resulting in $m = 100, 574$ items, respectively. 
For the experiments with our approach for the path metric, we set all $y_i$-coordinates in the considered TSPLIB instance to $0$.

For comparison, we are using the Algorithm $S5$ which iteratively executes the process defined in the Algorithm $S1$ introduced in~\cite{DBLP:conf/gecco/FaulknerPS015} as a greedy heuristic solver for the classical TTP until a specified time bound is fulfilled. $S1$ applies the Chained Lin-Kernighan heuristic to generate high-quality TSP tours and utilizes the \emph{PackIterative} algorithm to construct packing plans by selecting items based on their score, considering weight, profit, and distance to the last city. 
Note that $S5$ has achieved strong results across various TTP benchmark instances and is recognized as a top-performing algorithm for the problem~\cite{DBLP:conf/gecco/FaulknerPS015}. We run the algorithm on each considered TTP instance for a maximum duration of $10$ minutes on a MacBook Pro 2022 with a Apple M2 chip and 24GB RAM.

We start from the solution \( s = (\pi, x) \) produced by the S5 heuristic, where \( \pi \) denotes the tour and \( x \) is the corresponding packing plan. Fixing the packing plan \( x \), we derive a W-TSP instance in which the cost of a tour is determined by the accumulated weight induced by \( x \) along the tour. We then apply our tour-optimization algorithms to this induced W-TSP instance and evaluate whether, and by how much, they can improve upon the tour \( \pi \) returned by S5 under the same packing plan. 
Since the benchmark TTP instances specify a fixed starting node, all evaluated tours use the same starting node to ensure a fair and consistent comparison.

\subsection{Performance Analysis}

Tables~\ref{tab:summary_statistics} 
summarizes the performance of our proposed dynamic programming (DP) algorithm on instances derived from the TSPLIB dataset. Overall, our algorithm demonstrates consistent and sometimes significant improvements in the route planning objective compared to the benchmark approach. 
In total, we evaluated 25 
problem instances. As shown in Table~\ref{tab:summary_statistics}, the  average improvement in route cost is 13.58\%, with the  maximum improvement reaching 69.71\%, and  no degradation observed in any instance (i.e., the worst case is 0\% improvement, indicating the same with the benchmark).

\begin{table}[t]
    \centering
    \renewcommand{\arraystretch}{1.2} 
    \begin{tabular}{l c}
        \toprule
        \textbf{Metric} & \textbf{Value} \\
        \midrule
        Total instances analyzed & 25 \\
        Average improvement & 13.58\% \\
        Best improvement & 69.71\% \\
        Worst improvement & 0.00\% \\
        \bottomrule
    \end{tabular}
    \caption{
    Summary statistics of the experiment. A minimum improvement of 0.00\% indicates that, in some instances, the benchmark tour is already optimal for the induced W-TSP under the fixed packing plan. Since the DP approach computes an optimal solution for the path-metric W-TSP, its solution quality is never worse than that of the benchmark.” 
    }
\label{tab:summary_statistics}
\end{table}

To gain insight into how our DP algorithm achieves these improvements, we provide two visual comparisons of the paths produced by our method and the benchmark, as shown in Figures~\ref{fig:simulated} and~\ref{fig:scatter}. Regarding Figures~\ref{fig:simulated} and~\ref{fig:scatter}, we clarify that the Y-axis in these figures does not represent the actual geometric \( y \)-coordinates of the TSPLIB instances. For the path-metric experiments, all nodes indeed lie on a line.  
The Y-axis in Figures~\ref{fig:simulated} and~\ref{fig:scatter} is used solely for visualization purposes: it represents an artificial level assigned according to the visiting order in the tour. This layout allows us to visually distinguish the traversal order of nodes along the path and to highlight differences (or similarities) between tours produced by our algorithm and the benchmark. In particular, plotting all nodes on a single horizontal line would result in heavily overlapping edges and labels, making the comparison difficult to interpret. The visualization therefore separates nodes vertically to clearly illustrate the tour structure, while preserving the correct X-coordinates that define the path metric.

Figure~\ref{fig:simulated} shows a case where the path produced by our algorithm is identical to that of the benchmark. Both follow the same route pattern, indicating that the benchmark already performs optimally or near-optimally. This shows that DP algorithm pattern is able to recognize when no further improvement is possible and avoid unnecessary computation.  
In contrast, Figure~\ref{fig:scatter} illustrates a case where our algorithm achieves a 21.6\% improvement. The benchmark path proceeds in one direction for a longer distance while collecting nodes (and their associated weights) before turning back. This strategy can lead to higher costs, as the cost function is non-decreasing with respect to the total collected weight. In comparison, our DP algorithm strategically defers collecting node weights until the return path after reaching an endpoint. This allows it to collect items more efficiently on the way back, resulting in a more cost-effective tour.

These results show that while our DP algorithm matches the benchmark in some cases, it outperforms  others by leveraging an optimal item collection substructure. This insight could be used to enhance solution quality in existing TTP studies. 

\begin{figure}[t]
    \centering
    \begin{subfigure}{0.48\textwidth}
        \includegraphics[width=\textwidth]{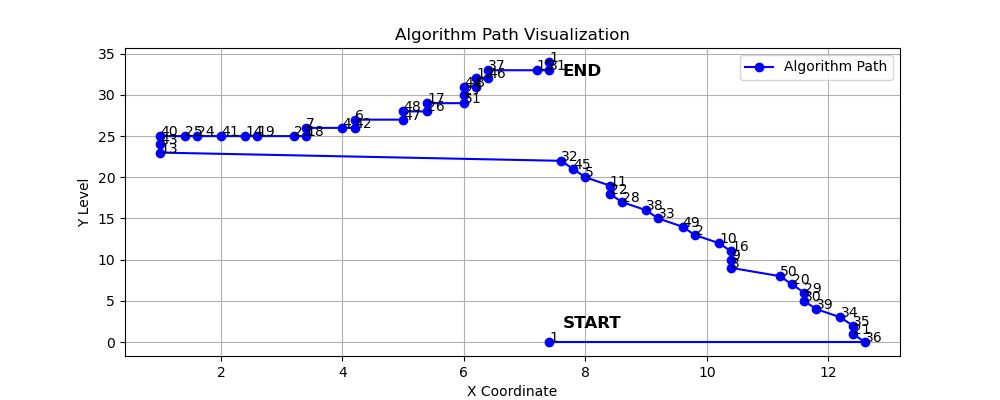}
    \end{subfigure}
    \begin{subfigure}{0.48\textwidth}
        \includegraphics[width=\textwidth]{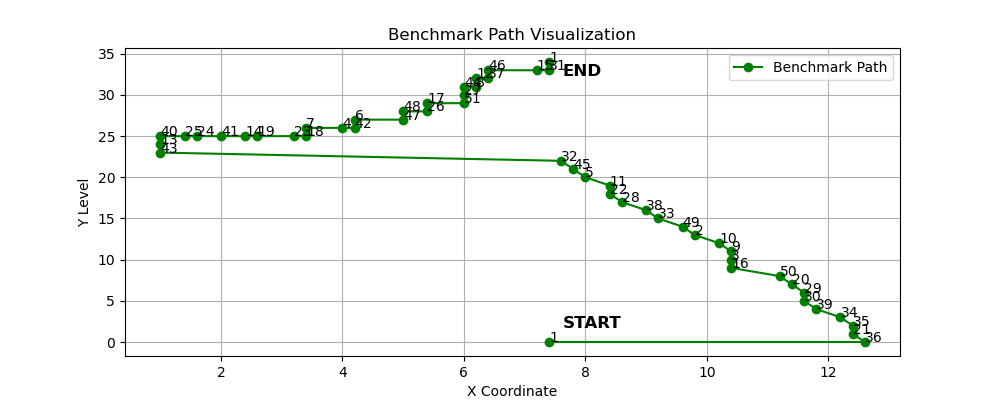}
    \end{subfigure}
    \caption{Same Result as Benchmark}
    \label{fig:simulated}
\end{figure}

\begin{figure}[t]
    \centering
    \begin{subfigure}{0.48\textwidth}
        \includegraphics[width=\textwidth]{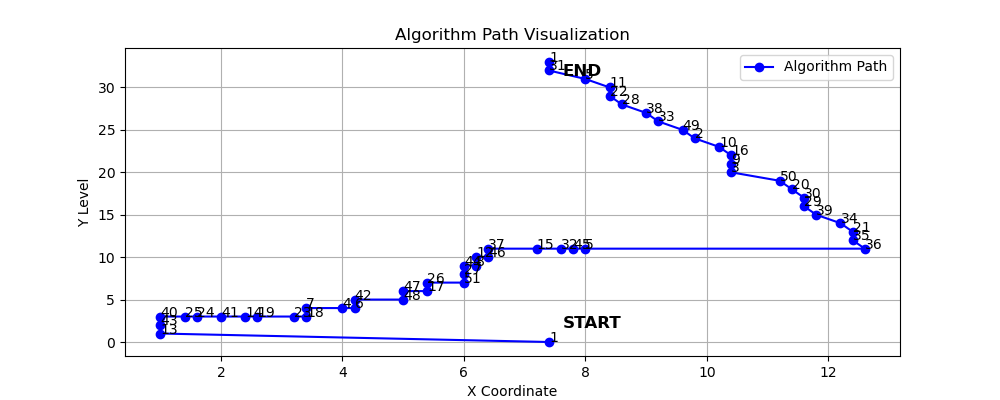}
    \end{subfigure}
    \begin{subfigure}{0.48\textwidth}
        \includegraphics[width=\textwidth]{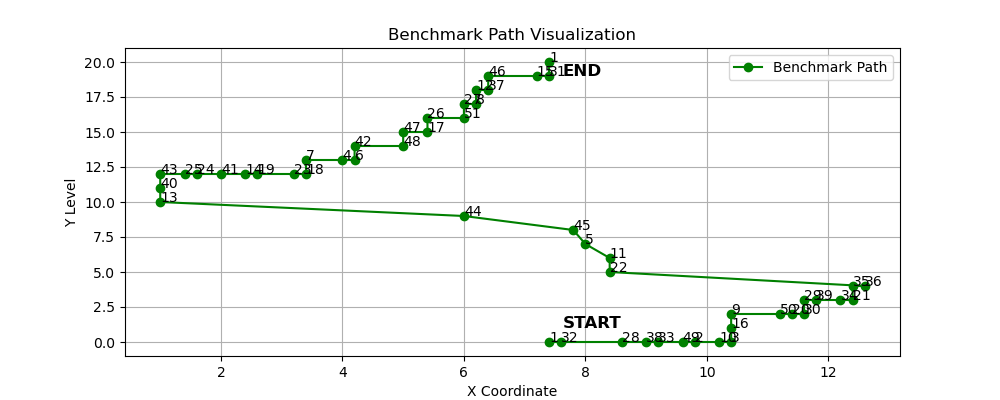}
    \end{subfigure}
    \caption{21.6\% Improvement Compared to Benchmark}
    \label{fig:scatter}
\end{figure}

\subsection{Time Complexity and Empirical Performance}  
We also conducted experiments on a synthetic dataset 
using a \(3 \times 5 \times 5\) configuration, 
combining different node sizes, item sets. 
Specifically, we evaluated $5$ node sizes \((n = 101, 501, 1001, 1501, 2001)\), where one node is designated as the depot and the remaining nodes are randomly distributed. For each number of nodes, we generated $5$ items per city resulting in instances with $m=500, 2500, 5000, 7500, 10000$ items, respectively. 
In the DP implementation, we observed that the step-wise runtime grows quadratically with respect to the number of nodes \(n\), which is consistent with the theoretical time complexity of our DP algorithm.

To improve scalability, we introduce a clustering strategy that aggregates nearby items into \( \sqrt{n} \) clusters based on their spatial locations. Specifically, for each item, we use the coordinates of the node at which it is located and apply a standard \(k\)-means clustering algorithm with \( k = \sqrt{n} \). Each cluster therefore consists of items whose locations are close in the underlying metric.
For each cluster, we construct a single representative (clustered) item as follows. The profit and weight of the representative item are set to the sum of the profits and weights of all items in the cluster. The representative item is assigned to a center node chosen as the node of the item whose location is closest (in Euclidean distance) to the cluster centroid. We maintain an explicit mapping from each representative item to the set of original items it aggregates, which allows the clustered solution to be interpreted in terms of the original instance. 
After clustering, the dynamic programming algorithm is applied to the reduced instance consisting of the clustered items. This significantly reduces the problem size, allowing the DP to run independently on the clustered representation. The resulting solution is then mapped back to the original items using the maintained cluster mapping. Empirically, this approach reduces the overall runtime to be approximately linear in \( n \) (see Figure~\ref{fig:cluster}), while increasing the total cost by only a small margin.

As shown in Table~\ref{tab:statistics_summary}, the preprocessing step including clustering and the computation of segment weights and distances, reduced runtime by an average of \(-82.67\%\), while the DP step achieved an even greater reduction of \(-99.28\%\). The average cost increase across all instances was only \(1.18\%\), indicating that our clustering-based approximation remains highly effective with minimal trade-offs in solution quality.

\begin{figure}[t]
    \centering
    \begin{subfigure}{0.48\textwidth}
        \includegraphics[width=0.9\textwidth]{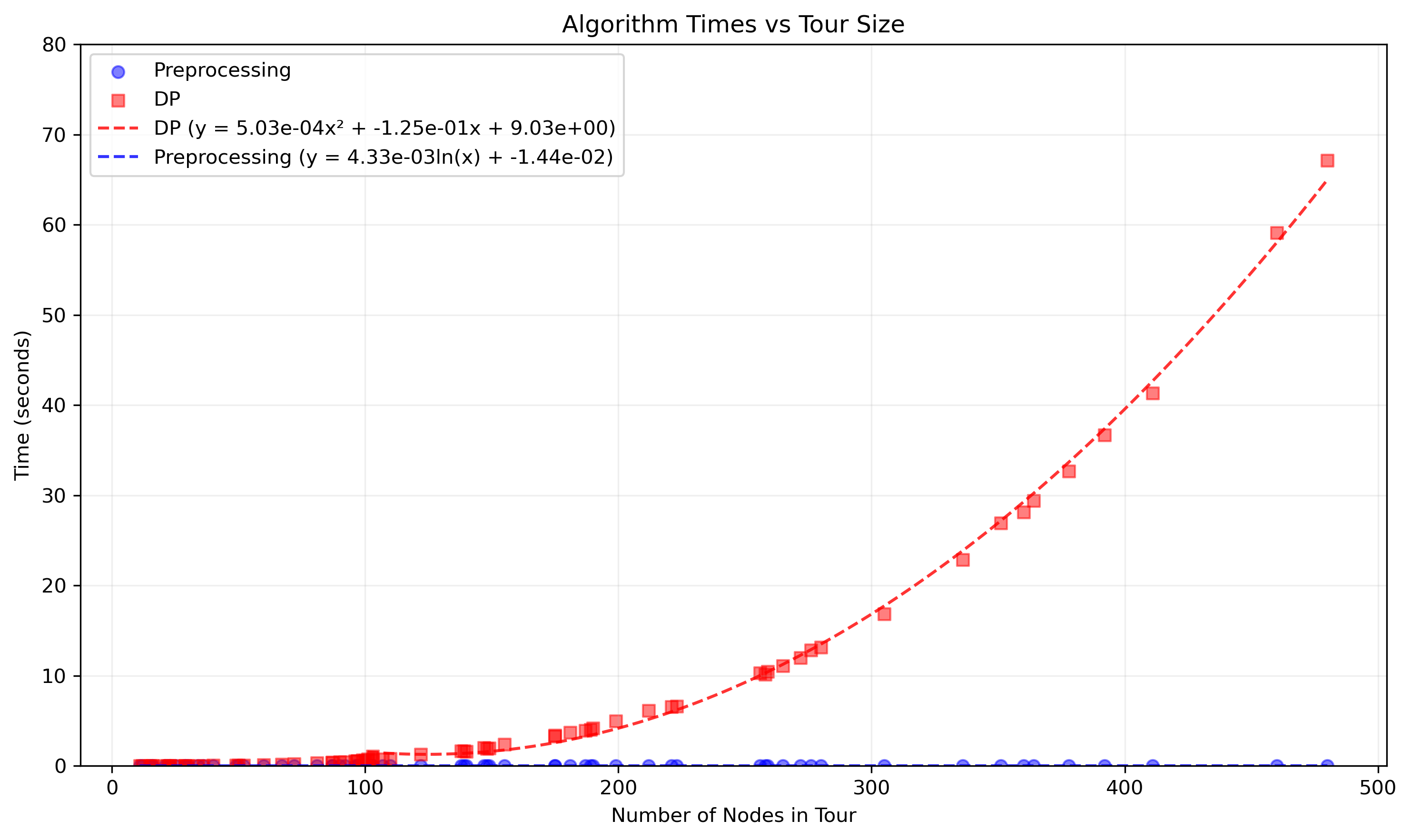} 
    \end{subfigure}
    \begin{subfigure}{0.48\textwidth}
        \includegraphics[width=0.9\textwidth]{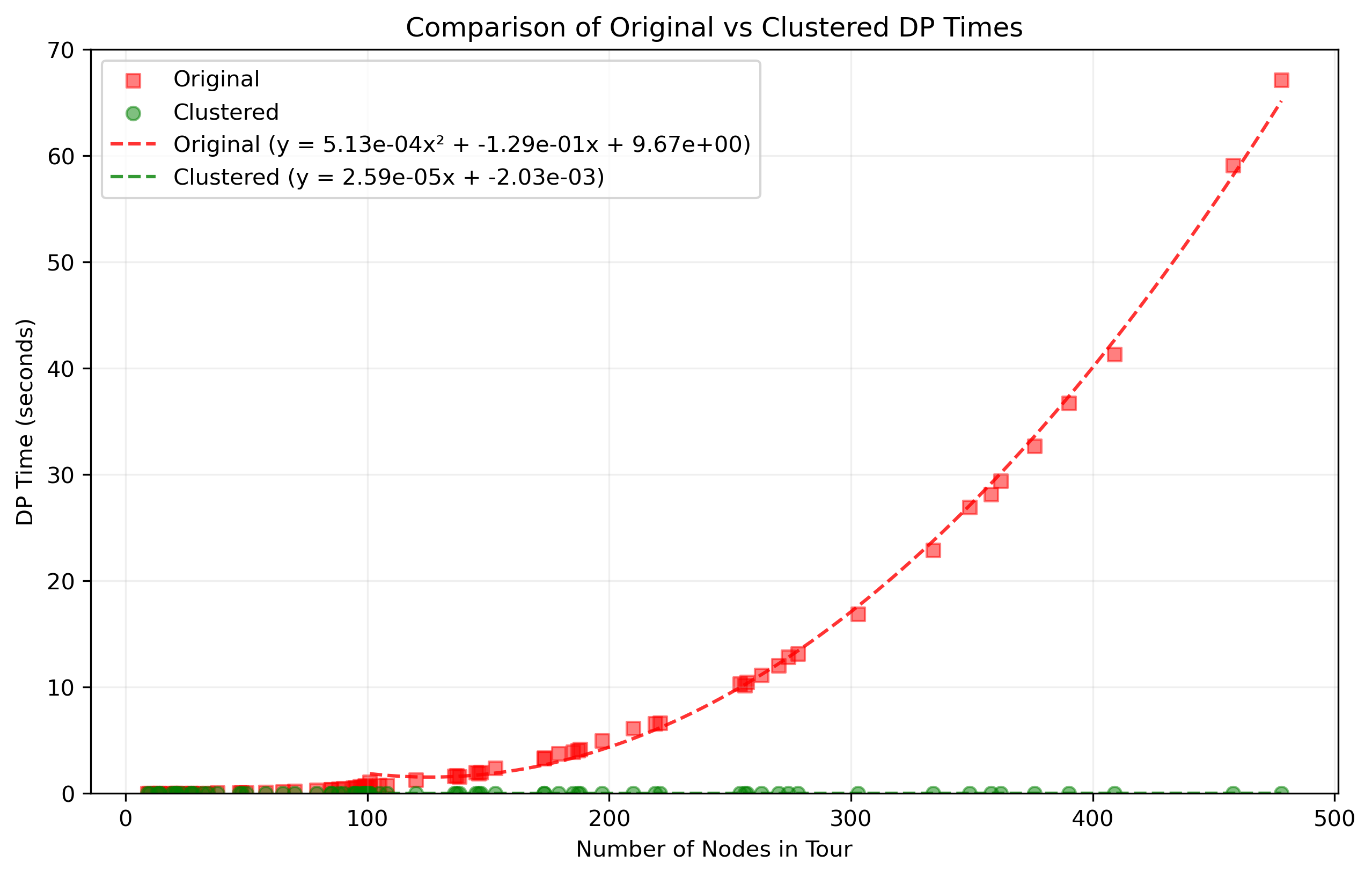} 
    \end{subfigure}
    \caption{Growth of Running Time with Increasing Problem Size}
    \label{fig:cluster}
\end{figure}

\begin{table}[t]
\centering
\begin{tabular}{lrrrr} 
\toprule
\textbf{Metric} & \textbf{Mean} & \textbf{Median} & \textbf{Min} & \textbf{Max} \\
\midrule
Pre Time & -82.67 & -93.73 & -98.51 & -30.14 \\
DP Time            & -99.28 & -99.87 & -99.98 & -91.94 \\
Cost               &  1.18  &  0.45  &   0.00 &   5.04 \\
\bottomrule
\end{tabular}
 \caption{Statistics Summary (\%)}
 \label{tab:statistics_summary}
\end{table}
 \vspace{-1em}

\section{Conclusion}  
Motivated by multi-component problems such as the Traveling Thief Problem, we studied a generalized version of the Traveling Salesman Problem, where the unit cost of traversing an edge depends on the total weight collected from previously visited nodes, instead of $1$. We analyzed this problem under various settings and present algorithms for special cases, including specific metrics (e.g., path metrics, star metrics) and special cost functions (e.g., linear functions representing speed).  Our experimental results demonstrate that, for practically relevant instances, the proposed algorithm—the dynamic programming approach for the path metric—can substantially improve solution quality over existing heuristics while remaining computationally efficient. 

\ignore{
\section*{Acknowledgements} 
This work was supported by the Australian Research Council through grant FT200100536.
}
\bibliographystyle{plainurl}
\bibliography{references}
\end{document}